\newtheorem{thm}{\textbf{\text{Theorem}}}
\newtheorem{lem}{\textbf{\text{Lemma}}}
\newtheorem{pro}{\textbf{\text{Proposition}}}
\newtheorem{Definition}{\textbf{\text{Definition}}}
\newtheorem{assumption}{\textbf{\text{Assumption}}}
\newtheorem{rmk}{\textbf{\text{Remark}}}
\newtheorem{cnd}{\textbf{\text{Condition}}}
\newcommand{\ie}{\textit{i.e.}}
\newcommand{\eg}{\textit{e.g.}}
\title{\LARGE \bf
Distributed Attitude Estimation for Multi-agent Systems on $SO(3)$}
\author{Mouaad Boughellaba and Abdelhamid Tayebi
\thanks{This work was supported by the National Sciences and Engineering Research Council of Canada (NSERC), under the grants NSERC-DG RGPIN 2020-06270. Preliminary results of this paper have been presented in \cite{Mouaad_ACC23}.} 
\thanks{The authors are with the Department of Electrical Engineering, Lakehead University, Thunder Bay, ON P7B 5E1, Canada \tt\small \{mboughel,atayebi\}@lakeheadu.ca.} 
}
\begin{document}

\maketitle
\thispagestyle{empty}
\pagestyle{empty}

\begin{abstract}
 We consider the problem of distributed attitude estimation of multi-agent systems, evolving on $SO(3)$, relying on individual angular velocity and relative attitude measurements. The interaction graph topology is assumed to be an undirected tree. First, we propose a continuous nonlinear distributed attitude estimation scheme with almost global asymptotic stability guarantees. Thereafter, we proceed with the \textit{hybridization} of the proposed estimation scheme to derive a new hybrid nonlinear distributed attitude estimation scheme enjoying global asymptotic stabilization of the attitude estimation errors to a common constant orientation. In addition, the proposed hybrid attitude estimation scheme is used to solve the problem of pose estimation of $N$-vehicles navigating in a three-dimensional space, with global asymptotic stability guarantees, where the only available measurements are the local relative bearings and  the individual linear velocities. Simulation results are provided to illustrate the effectiveness of the proposed estimation schemes.
\end{abstract}

\section{Introduction}
In the last few decades there has been a growing interest in the development of cooperative estimation and control schemes for multi-agent autonomous systems such as satellites, unmanned aerial vehicles (UAVs) and marine vehicles. 
The distributed cooperative estimation problem for multi-agent autonomous vehicles, which is the topic of the present paper, consists in estimating some (or all) of the agents' states (position, orientation, velocity) using local information exchange between neighboring agents. A particular problem of great importance in this field, namely the distributed cooperative attitude estimation, consists of estimating the agents' attitudes using some absolute individual measurements and some inter-agent (relative) measurements according to a predefined interaction graph topology between the agents involved in the group. The importance of this problem stems from the fact that the available absolute individual measurements (assumed in this paper) are not enough to allow each agent to estimate its orientation independently from other agents.
It is well known that the only representation that represents the attitude of a rigid body uniquely and globally is the rotation matrix belonging to the special orthogonal group $SO(3)$ which is a smooth manifold with group properties (\textit{i.e.,} matrix Lie group).
Since $SO(3)$ is boundaryless odd-dimentioned compact manifold, it is non-diffeomorphic to any Euclidean space, and as such, it is not possible to achieve global stability results with time-invariant continuous vector fields on $SO(3)$ \cite{Koditschek,Bhat_SCL2000}, \textit{i.e.,} on top of the desired attractive equilibrium there are other undesired equilibria. Since the use of classical cooperative schemes (on Euclidean spaces) for systems evolving on smooth manifolds is not trivial, appropriate cooperative control and estimation techniques needed to be developed. As such, some consensus-based attitude synchronization schemes on $SO(3)$ have been proposed in the literature (see, for instance, \cite{Tron_CDC2012, Tron_TAC2012, Markdahl_TAC2020, SARLETTE2009572, SARLETTE20072232, Alain_SIAM}). 
Motivated by the attitude synchronization schemes mentioned in the above references, some distributed cooperative attitude estimation schemes, designed directly on $SO(3)$, have been proposed in the literature. 
For instance, the authors in \cite{Tron_TAC2014} proposed a distributed attitude localization scheme for camera sensor networks by reshaping the cost function, given in \cite{Tron_CDC2011}, such that the only stable equilibria of the proposed scheme are global minimizers. However, this estimation scheme guarantees only almost global asymptotic stability. Also, in a series of papers \cite{lee_2016, lee_auto2016, Tran_CDC2018, lee_2019}, the authors proposed some attitude estimation schemes relying on classical (Euclidean) consensus algorithms along with the Gram-Schmidt orthonormalization procedure. Recently, these attitude estimation algorithms have been extended to deal with time-varying orientations in n-dimensional Euclidean spaces \cite{Tran_TCNS2019, Tran_TCNS2020}. Note that the attitude estimation schemes based on the Gram-Schmidt orthonormalization procedure may lead to problems when the estimated matrix (that does not belong to $SO(3)$) is singular. A more recent work \cite{li_2020} suggests a similar algorithm as the one in  \cite{lee_2016, Tran_CDC2018}, but without Gram-Schmidt orthonormalization. As a result, the algorithm provides estimates of the agents' orientations only when the time tends to infinity and not for all times, which makes the algorithm inappropriate for control applications requiring instantaneous orientations for feedback.\\
In this paper, we propose two distributed attitude estimation schemes on $SO(3)$ for a group of rigid body agents under an undirected, connected and acyclic graph topology. Each agent measures its own angular velocity in the respective body-frame, measures the relative orientation with respect to its neighbours, and receives information from its neighbors. Note that both estimation schemes lead to attitude estimates up to a constant orientation which can be determined in the presence of a leader in the group (knowing its absolute orientation). Furthermore, as an application, we design a hybrid distributed position estimation law that uses the estimated attitudes, provided by the hybrid distributed attitude observer, the local relative (time-varying) bearing information, and the individual linear velocities. The contribution of this work can be summarized as follows:
\begin{enumerate}
    \item Inspired by the consensus optimization framework on manifolds introduced in \cite{Sarlette2009}, we propose a continuous distributed attitude estimation scheme on $SO(3)$. Moreover, we provide a rigorous stability analysis that shows that the proposed continuous attitude observer enjoys almost global asymptotic stability. Compared to the existing works, such as \cite{lee_2016,lee_2019,Tran_TCNS2019,Tran_TCNS2020}, the proposed continuous distributed attitude observer, on top of being designed directly on $SO(3)$ and endowed with stability (not only convergence) results, is much simpler and does not require any auxiliary matrices and orthonormalization procedures (\eg, Gram-Schmidt orthonormalization), which may complicate the implementation of the observer and add extra computational overhead.
    \item To overcome the topological obstruction that precludes global asymptotic stability on $SO(3)$ with smooth vector fields, we propose a new hybrid distributed attitude observer enjoying global asymptotic stability. In contrast to the conference version of this work \cite{Mouaad_ACC23}, we have developed a systematic and generic approach for the design of this distributed hybrid attitude estimation scheme. This approach has also facilitated the introduction of a new generic multi-agent switching mechanism that can potentially be applied to address other challenging problems, such as global synchronization on $SO(3)$.   
    \item Finally, we propose a bearing-based hybrid distributed pose estimation scheme guaranteeing global pose estimation of the $N$-agent system up to a constant translation and orientation. Unlike our previous work \cite{Mouaad_LCSS23}, which considers static positions and time-varying orientations, and guarantees almost global asymptotic stability, the present work deals with time-varying positions and orientations with global asymptotic stability guarantees. To the best of the authors' knowledge, there is no work in the literature that addresses the problem of distributed pose estimation, with global asymptotic stability guarantees, considering time-varying positions and orientations with local time-varying bearing measurements.
\end{enumerate}
A preliminary version of this work was presented in \cite{Mouaad_ACC23}, excluding the continuous distributed attitude estimation scheme and the bearing-based distributed hybrid pose estimation scheme discussed in the present paper.

The remainder of this paper is organized as follows: Section \ref{s2} provides some preliminaries necessary for this paper. In Section \ref{s3}, we formulate the distributed attitude estimation problem for multi-agent systems. Section \ref{s4} presents our proposed continuous and hybrid distributed attitude estimation schemes with a rigorous stability analysis. In section \ref{s5}, we address the problem of distributed pose estimation of the $N$-agent system formation, evolving in a three-dimensional space, using our proposed hybrid distributed attitude estimation scheme. Simulation results and concluding remarks are presented in Section \ref{s6} and Section \ref{s7}, respectively.


\section{Preliminaries}\label{s2}

\subsection{Notations}
\noindent The sets of real numbers and the $n$-dimensional Euclidean space  are denoted by $\mathbb{R}$ and $\mathbb{R}^n$, respectively. The set of unit vectors in $\mathbb{R}^n$ is defined as $\mathbb{S}^{n-1}:=\{x\in \mathbb{R}^n~|~x^T x =1\}$. Given two matrices $A$,$B$ $\in \mathbb{R}^{m\times n}$, their Euclidean inner product is defined as $\langle \langle A,B \rangle \rangle=\text{tr}(A^T B)$. The Euclidean norm of a vector $x \in \mathbb{R}^n$ is defined as $||x||=\sqrt{x^T x}$, and
the Frobenius norm of a matrix $A \in \mathbb{R}^{n\times n}$ is given by $||A||_F=\sqrt{\text{tr}(A^T A)}$. The matrix $I_n \in \mathbb{R}^{n \times n}$ denotes the identity matrix, and $\textbf{1}_n=[1\hdots1]^T \in \mathbb{R}^n$. Consider a smooth manifold $\mathcal{Q}$ with $\mathcal{T}_x \mathcal{Q}$ being its tangent space at point $x \in \mathcal{O}$. Let $f: \mathcal{Q} \rightarrow \mathbb{R}_{\geq 0}$ be a continuously differentiable real-valued function. The function $f$ is a potential function on $\mathcal{Q}$ with respect to set $\mathcal{B}\in \mathcal{Q}$ if $f(x)=0$, $\forall x \in \mathcal{B}$, and $f(x) > 0$, $\forall x\notin \mathcal{B}$. The gradient of $f$ at $x \in \mathcal{Q}$, denoted by $\nabla_x f(x)$, is defined as the unique element of $\mathcal{T}_x \mathcal{Q}$ such that \cite{Mahony_book_OAMM}:
\begin{equation}
    \dot f(x)=\langle \nabla_x f(x), \eta\rangle_x, ~~~~~ \forall \eta \in \mathcal{T}_x \mathcal{Q},\nonumber
\end{equation}
where $\langle~,~\rangle_x:\mathcal{T}_x \mathcal{Q} \times \mathcal{T}_x \mathcal{Q} \rightarrow \mathbb{R}$ is Riemannian metric on $\mathcal{Q}$. The point $x \in \mathcal{Q}$ is said to be a critical point of $f$ if $\nabla_x f(x)=0$.\\
The attitude of a rigid body is represented by a rotation matrix $R$ which belongs to the special orthogonal group $SO(3):= \{ R\in \mathbb{R}^{3\times 3} | \hspace{0.1cm}\text{det}(R)=1, R^TR=I_3\}$. The $SO(3)$ group has a compact manifold structure and its tangent space is given by $\mathcal{T}_RSO(3):=\{R \hspace{0.1cm}\Omega \hspace{0.2cm} | \hspace{0.2cm} \Omega \in \mathfrak{so}(3)\}$ where $\mathfrak{so}(3):=\{ \Omega \in \mathbb{R}^{3\times 3} | \Omega^T=-\Omega\}$ is the Lie algebra of the matrix Lie group $SO(3)$. The map $[.]^{\times}: \mathbb{R}^3 \rightarrow \mathfrak{so}(3)$ is defined such that $[x]^\times y=x \times y$, for any $x,y \in \mathbb{R}^3$, where $\times$ denotes the vector cross product on $\mathbb{R}^3$. The inverse map of $[.]^{\times}$ is $\text{vex}: \mathfrak{so}(3) \rightarrow \mathbb{R}^3$ such that $\text{vex}([\omega]^\times)=\omega$, and $[\text{vex}(\Omega)]^\times=\Omega$ for all $\omega \in \mathbb{R}^3$ and $\Omega \in \mathfrak{so}(3)$. Also, let $\mathbb{P}_a : \mathbb{R}^{3\times 3} \rightarrow \mathfrak{so}(3)$ be the projection map on the Lie algebra $\mathfrak{so}(3)$ such that $\mathbb{P}_a(A):=(A-A^T)/2$. Given a 3-by-3 matrix $C:=[c_{ij}]_{i,j=1,2,3}$, one has  $\psi(C) := \text{vex} \circ \mathbb{P}_a (C)=\text{vex}(\mathbb{P}_a(C))=\frac{1}{2}[c_{32}-c_{23},c_{13}-c_{31},c_{21}-c_{12}]^T$. For any $R\in SO(3)$, the normalized Euclidean distance on $SO(3)$, with respect to the identity $I_3$, is defined as $|R|_I^2:=\frac{1}{4}\text{tr}(I_3-R)$ $\in[0,1]$. The angle-axis parameterization of $SO(3)$, is given by $\mathcal{R}_\alpha(\theta, v):=I_3+sin\hspace{0.05cm}\theta \hspace{0.2cm}[v]^\times + (1-cos\hspace{0.05cm}\theta)([v]^\times)^2$, where $v\in \mathbb{S}^2$ and  $\theta \in \mathbb{R}$ are the rotational axis and angle, respectively. The orthogonal projection map $P:\mathbb{R}^3\rightarrow\mathbb{R}^{3\times3}$ is defined as $P_x:=I_3-\frac{x x^T}{||x||^2}$, $\forall x\in\mathbb{R}^3\setminus\{0_3\}$. One can verify that $P_x^T=P_x$, $P_x^2=P_x$, and $P_x$ is positive semi-definite. The Kronecker product of two matrices A and B is denoted by $A \otimes B$. The n-dimensional orthogonal group, denoted by $O(n)$, is defined as $O(n):=\{A \in \mathbb{R}^{n\times n}:~A A^T=A^T A= I_n\}$. 
\subsection{Graph Theory}
Consider a network of $N$ agents. The interaction topology between the agents is described by an undirected graph $\mathcal G = (\mathcal V,\mathcal E)$ where $\mathcal V=\{1,...,N\}$ and $\mathcal E \subseteq \mathcal V \times \mathcal V $ represent the vertex (or agent) set and the edge set of graph $\mathcal{G}$, respectively. In undirected graphs, the edge $(i,j) \in \mathcal E$ indicates that agents $i$ and $j$ interact with each other without any restriction on the direction, which means that agent $i$ can obtain information (via communication, measurements, or both) from agent $j$ and vice versa. The set of neighbors of agent $i$ is defined as $\mathcal N_i = \{j \in \mathcal V : (i,j) \in \mathcal E \}$. The undirected path is a sequence of edges in an undirected graph. An undirected graph is called connected if there is an undirected path between every pair of distinct agents of the graph. An undirected graph has a cycle if there exists an undirected path that starts and ends at the same agent \cite{Ren_book}. An acyclic undirected graph is an undirected graph without a cycle. An undirected tree is an undirected graph in which any two agents are connected by exactly one path (\ie, an undirected tree is an undirected, connected, and acyclic graph). Consider an undirected tree with an arbitrarily assigned orientation to each edge. Let $M=|\mathcal{E}|$ and $\mathcal{M}=\{1,\hdots, M\}$ be the total number of edges and the set of edge indices, respectively. The incidence matrix, denoted by $H\in\mathbb{R}^{N\times M}$, is defined as follows \cite{BAI20083170}:
\begin{equation}\label{h_matrix}
   H:=[h_{ik}]_{N\times M} \hspace{0.4cm} \text{with} \hspace{0.2cm} h_{ik}=\begin{cases}
      +1 & k\in\mathcal{M}_i^+\\
      -1 & k\in\mathcal{M}_i^-\\
      0 & \text{otherwise}
    \end{cases} \nonumber,    
\end{equation}   
where $\mathcal{M}_i^+ \subset \mathcal{M}$ denotes the subset of edge indices in which agent $i$ is the head of the oriented edges and $\mathcal{M}_i^- \subset \mathcal{M}$ denotes the subset of edge indices in which agent $i$ is the tail of the oriented edges. For a connected graph, one verifies that $H^T\textbf{1}_N=0$ and $\mbox{rank}(H)=N-1$. Moreover, the columns of $H$ are linearly independent if the graph is an undirected tree.

\subsection{Hybrid Systems Framework}
A hybrid system consists of continuous dynamics called flows and discrete dynamics called jumps. Given a manifold $\mathcal{Y}$ embedded in $\mathbb{R}^n$, according to \cite{Goebel_automatica_2006, Goebel_ieee_magazine,goebel2012hybrid}, the hybrid system dynamics, denoted by $\mathcal{H}$, are represented by the following compact form, for every $y\in \mathcal{Y}$:
\begin{align}\label{H_sys}
    \mathcal{H}:
    \begin{cases}
       \dot{y} \in F(y) &\quad y\in \mathcal{F}\\
       y^+ \in G(y) & \quad y\in \mathcal{J}
    \end{cases}
\end{align}
where $F$ and $G$ represent the flow and jump maps, respectively, which govern the dynamics of the state $y$ through a continuous flow (if $y$  belong to the flow set $\mathcal{F}$) and a discrete jump (if  $y$ belong to the jump set $\mathcal{J}$). According to the nature of the hybrid system dynamics, which allows continuous flows and discrete jumps, the solutions of the hybrid system are parameterized by $t \in \mathbb{R}_{\geq 0}$ to indicate the amount of time spent in the flow set and $j \in \mathbb{N}$ to track the number of jumps that occur. The structure that represents this parameterization, which is known as a \textit{hybrid time domain}, is a subset of $\mathbb{R}_{\geq 0} \times \mathbb{N}$ and is denoted as \textup{dom}~$y$. If the solution of the hybrid system $\mathcal{H}$ cannot be extended by either flowing or jumping, it is called \textit{maximal}, and \textit{complete} if its domain dom x is unbounded.

\section{Problem Statement}\label{s3}
\noindent Consider $N$-agent system governed by the following rotational kinematic equation:
\begin{equation}\label{R_dynamics}
    \dot{R}_i = R_i[\omega_i]^{\times},
\end{equation}
where $R_i \in SO(3)$ represents the orientation of the body-attached frame of agent $i$ with respect to the inertial frame, and $\omega_i\in \mathbb{R}^3$ is the angular velocity of agent $i$ measured in the body-attached frame of the same agent. The measurement of the relative orientation between agent $i$ and agent $j$ is given by
\begin{equation}\label{measurement_model_R}
    R_{ij} := R_i^T R_j,
\end{equation}
where $(i,j)\in\mathcal E$. Note that according to the kinematic equation \eqref{R_dynamics}, the agents orientations vary with time. However, for the sake of simplicity, the time argument $t$ has been omitted from the above expression. Let the graph $\mathcal G$ describe the interaction between agents (the relative measurements and communication). Now, consider the following assumptions:

\begin{assumption}\label{measurement_ass}
  Each agent $i \in \mathcal V$ measures the relative orientations $R_{ij}$ with respect to its neighboring agents $j\in \mathcal{N}_i$. In addition, each agent can also share information, through communication, with its neighbors. 
\end{assumption}

\begin{assumption}\label{graph_ass}
  The interaction graph $\mathcal G$ is assumed to be an undirected tree. 
\end{assumption}

\begin{assumption}\label{measurement_avialble_w}
  The body-frame angular velocity of each agent is bounded and available.
\end{assumption}
The problem under consideration can be formulated as follows:\\
Consider a network of $N$ agents rotating according to the kinematics (\ref{R_dynamics}) where the relative attitude measurements \eqref{measurement_model_R} are available according to the graph topology $\mathcal{G}$. Design a distributed attitude estimation scheme endowed with global asymptotic stability guarantees, under Assumptions \ref{measurement_ass}-\ref{measurement_avialble_w}.
\begin{rmk}
    Since only relative attitude measurements are available, it should be understood from the above problem statement that the goal is to globally estimate the orientation of each agent up to a common constant orientation. This common constant orientation can be determined if at least one agent has access to its own orientation.
\end{rmk}

\section{Distributed Attitude Estimation}\label{s4}
For $i\in\mathcal{V}$, we propose the following attitude observer on $SO(3)$:
\begin{equation}\label{observer_dynamics}
    \dot{\hat{R}}_i=\hat{R}_i[\omega_i-k_R\hat{R}_i^T \sigma_i]^\times,
\end{equation}
where $k_R>0$, $\hat{R}_i \in SO(3)$ is the estimate of $R_i$, and $\sigma_i \in \mathbb{R}^3$ is the correcting term that will be designated later. Let $\tilde R_i:=R_i \hat{R}_i^T$ denote the absolute attitude error of agent $i$. In view of (\ref{R_dynamics}) and (\ref{observer_dynamics}), one has 
\begin{equation}\label{absolute_attitude_error}
    \dot{\tilde R}_i = k_R\tilde R_i [ \sigma_i]^\times.
\end{equation}
Consider an arbitrary orientation of the interaction graph $\mathcal{G}$. For every $(i, j) \in \mathcal{E}$, suppose that agent $i$ and agent $j$ are the head and tail, respectively, of the oriented edge connecting them, indexed by $k$. One can define the relative attitude error between them as $\bar R_k := \tilde R_j^T \tilde R_i$, where $k = \mathcal{M}_i^+ \cap \mathcal{M}_j^- \in \mathcal{M}$. It follows from (\ref{absolute_attitude_error}) that
\begin{align}\label{exp_R_bar}
    \dot{\bar R}_k &= -k_R [\sigma_j]^\times \tilde R_j^T \tilde R_i + k_R \tilde R_j^T \tilde R_i [\sigma_i]^\times \nonumber\\
    &= -k_R [\sigma_j]^\times \bar R_k + k_R \bar R_k [\sigma_i]^\times.
\end{align}
Using the fact that $[R x]^\times=R x^\times R^T$ for every $R \in SO(3)$ and $x \in \mathbb{R}^3$, it follows from \eqref{exp_R_bar} that 
\begin{align}
    \dot{\bar R}_k &= -k_R \bar R_k [\bar R_k^T\sigma_j]^\times  + k_R \bar R_k [\sigma_i]^\times\nonumber\\
    &= k_R \bar R_k \left(-[\bar R_k^T\sigma_j]^\times  +[\sigma_i]^\times \right).
\end{align}
Moreover, since $[x+y]^\times=[x]^\times+[y]^\times$ for every $x, y \in \mathbb{R}^3$, one has
\begin{equation}\label{relative_att_dynamics}
    \dot{\bar R}_k = k_R\bar R_k [\bar \sigma_k]^\times,
\end{equation}
where $\bar \sigma_k :=  \sigma_i - \bar R_k^T \sigma_j$. Note that, for every $i \in \mathcal{V}$ and $ j \in \mathcal{N}_i$, the intersection between the sets $\mathcal{M}_i^+$ and $\mathcal{M}_j^-$ is either a single element (if agent $i$ and agent $j$ are the head and tail, respectively, of the oriented edge connecting them) or an empty set otherwise. Let $\bar \sigma :=[\bar \sigma_1^T, \bar \sigma_2^T, \hdots, \bar \sigma_M^T]^T \in \mathbb{R}^{3M}$ and $\sigma :=[\sigma_1^T, \sigma_2^T, \hdots, \sigma_N^T]^T \in \mathbb{R}^{3N}$. One can verify that \cite{BAI20083170}
\begin{equation}\label{s_bar}
    \bar \sigma = \textit{\textbf{H}}(t)^T \sigma,
\end{equation}
where 
\begin{equation}\label{H_bar}
   \textit{\textbf{H}}(t):=[H_{ik}]_{3N\times 3M} \hspace{0.3cm} \text{with} \hspace{0.3cm} H_{ik}=\begin{cases}
      I_3 & k\in\mathcal{M}_i^+\\
      -\bar R_k & k\in\mathcal{M}_i^-\\
      0 & \text{otherwise}
    \end{cases}.   
\end{equation} 
The matrix $\textit{\textbf{H}}$ inherits some properties from the incidence matrix $H$, such as the adjacency relationships in the graph and the orientation that the graph enjoys. Moreover, in the following lemma, we will introduce an important property that $\textit{\textbf{H}}$ enjoys when Assumption \ref{graph_ass} holds.
\begin{lem}\label{lem_H}
    Consider the matrix $\textit{\textbf{H}}(t)$ obtained from the graph $\mathcal{G}$ with an arbitrary orientation of the edges, satisfying Assumption \ref{graph_ass}. Then, $\forall t\geq 0$, $\textit{\textbf{H}}(t)x=0$  implies $x=0$.
\end{lem}
\begin{proof}
    The proof can be found in \cite{Mouaad_ACC23}.
\end{proof}
\begin{rmk}
    It is important to note that the arbitrary orientation assigned to the graph $\mathcal{G}$ is only a dummy orientation introduced to simplify the design process and analysis of our proposed schemes, and does not change the nature of the interaction graph $\mathcal{G}$ from being an undirected graph. 
\end{rmk}
In the sequel, we propose two attitude estimation schemes through an appropriate design of the correcting term $\sigma_i$. We will start with the continuous version of the observer in the next subsection.

\subsection{Continuous Distributed Attitude Estimation}
Consider the observer given in \eqref{observer_dynamics}, with the  following correcting term:
\begin{align}\label{C_correcting_term}
    \sigma_i&=-\sum_{j\in \mathcal{N}_i} \psi(A\hat{R}_jR_{ij}^T\hat{R}_i^T),
\end{align}
where $i\in\mathcal{V}$ and $A\in \mathbb{R}^{3\times3}$. Let $x:= (\bar R_1, \bar{R}_2, \hdots, \bar R_M)\in \mathcal{S}$, with $\mathcal{S}:=\left(SO(3)\right)^M$. From \eqref{relative_att_dynamics}-\eqref{s_bar} and \eqref{C_correcting_term}, one can derive the following multi-agent closed-loop dynamics:
\begin{equation}\label{continuous_sys}
\dot x = f(x) \hspace{1.5cm}x \in \mathcal{S}, 
\end{equation}
where
\begin{equation}
    f(x) = \left[ {\begin{array}{c} {{f_1(x)}}\\ \vdots\\ {f_M(x)} \end{array}}\right] \hspace{0.5cm} \text{with} \hspace{0.5cm} f_k(x) = k_R\bar{R}_k[\bar \sigma_k]^\times.\nonumber
\end{equation}
Before presenting the main result of this section, we first introduce the following instrumental assumption:
\begin{assumption}\label{A_ass}
    $A \in \mathbb{R}^{3 \times 3}$ is a symmetric and positive definite matrix with three distinct eigenvalues.
\end{assumption}
\noindent The following theorem provides the stability properties of the equilibrium points of \eqref{continuous_sys}.
\begin{thm}\label{theorem1}
     Consider the attitude kinematics (\ref{R_dynamics}) with measurements (\ref{measurement_model_R}) and observer (\ref{observer_dynamics}) together with the correcting term (\ref{C_correcting_term}), where Assumptions \ref{measurement_ass}-\ref{A_ass} are satisfied. Then, the following statements hold:
   \begin{enumerate}[i)]
       \item All solutions of (\ref{continuous_sys}) converge to the following set of equilibria: $\Upsilon : =\{x \in \mathcal{S}: \forall k\in \mathcal{M}, ~\psi(A\bar{R}_k)=0\}$. \label{set_of_equilibrium} 
     \item \label{stable_equilibrium} The desired equilibrium set $\mathcal{A}:=\{x \in \mathcal{S}: \forall k\in \mathcal{M}, ~\bar{R}_k=I_3\} \in \Upsilon$, for the closed-loop system (\ref{continuous_sys}), is locally asymptotically stable.\label{A_local_stability}
       \item  The set of all undesired equilibrium points $\Upsilon \setminus \mathcal{A}$ is unstable and the desired equilibrium set $\mathcal{A}$ is \textit{almost globally asymptotically stable}\footnote{ The set $\mathcal{A}$ is said to be almost globally asymptotically stable if it is asymptotically stable, and attaractive from all initial conditions except a set of zero Lebesgue measure.} for the closed-loop system (\ref{continuous_sys}). \label{stability_of_equilibrium}
   \end{enumerate}
\end{thm}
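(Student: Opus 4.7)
The plan is to use a single Lyapunov functional, rewrite the correcting term in the compact form $\sigma=-\bar H\Psi$ by exploiting the identity $\bar R_k\psi(A\bar R_k)=-\psi(A\bar R_k^{\top})$ for $\bar R_k\in SO(3)$ and symmetric $A$, then invoke Lemma~\ref{H_rank} to upgrade $\dot V=0$ into $\Psi=0$, and finally close item (iii) by a standard hyperbolic-equilibrium/measure-zero argument. Here $\Psi(x):=[\psi(A\bar R_1)^{\top},\dots,\psi(A\bar R_M)^{\top}]^{\top}\in\mathbb{R}^{3M}$.

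\textbf{Lyapunov analysis and item (i).} I would take
\[
V(x)=\sum_{k=1}^{M}\text{tr}\bigl(A(I_3-\bar R_k)\bigr),
\]
which is smooth and non-negative on the compact set $\mathcal{S}$ and vanishes only on $\mathcal{A}$. Combining \eqref{relative_att_dynamics} with $\text{tr}(M[u]^{\times})=-2u^{\top}\psi(M)$ gives $\dot V=2k_R\,\bar\sigma^{\top}\Psi$. The algebraic identity $\bar R_k\psi(A\bar R_k)=-\psi(A\bar R_k^{\top})$ (verified by expanding both sides in the eigenbasis of $A$ and using the cofactor relations of $SO(3)$) rewrites the correcting law~\eqref{C_correcting_term} as $\sigma=-\bar H(t)\Psi$; together with $\bar\sigma=\bar H(t)^{\top}\sigma$ this yields
\[
\dot V=-2k_R\|\bar H(t)\Psi\|^{2}=-2k_R\|\sigma\|^{2}\leq 0.
\]
By compactness of $\mathcal{S}$ and LaSalle's invariance principle, every trajectory converges to the largest invariant subset of $\{x\in\mathcal{S}:\sigma(x)=0\}$. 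Since $\sigma=-\bar H\Psi$ and Lemma~\ref{H_rank} gives $\bar H\Psi=0\Rightarrow\Psi=0$, the condition $\sigma(x)=0$ is equivalent to $\Psi(x)=0$. The classical characterization of the zeros of $\psi(A\,\cdot)$ on $SO(3)$ for $A\succ 0$ with three distinct eigenvalues then identifies $\{x:\Psi(x)=0\}$ with $\Upsilon$; since $\Psi=0\Rightarrow\sigma=0\Rightarrow\bar\sigma=0\Rightarrow\dot x=0$, the set $\Upsilon$ is pointwise invariant, proving item (i).

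\textbf{Item (ii).} Because $A\succ 0$, $\text{tr}(A(I_3-R))>0$ for every $R\neq I_3$, so $V$ is positive definite with respect to $\mathcal{A}$. The finitely many points of $\Upsilon\setminus\mathcal{A}$ are isolated from $\mathcal{A}$ by a uniform positive $V$-gap (each factor $\bar R_k=\mathcal{R}_\alpha(\pi,v)$ contributes $2(\text{tr}(A)-v^{\top} A v)>0$ to $V$), so on a sufficiently small neighborhood of $\mathcal{A}$ the only invariant subset of $\{\dot V=0\}$ is $\mathcal{A}$ itself; coupled with $\dot V\leq 0$ this yields local asymptotic stability.

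\textbf{Item (iii).} For each undesired equilibrium $x^{*}\in\Upsilon\setminus\mathcal{A}$, I would linearize $\dot x=f(x)$ in the exponential-chart coordinates $\bar R_k=\bar R_k^{*}\exp([\delta_k]^{\times})$, with $\bar R_k^{*}=\mathcal{R}_\alpha(\pi,v_k)$, $v_k\in\mathcal{E}(A)$. Expanding $\psi(A\bar R_k)$ to first order in $\delta_k$ and resolving in the eigenbasis of $A$ yields the familiar $180^{\circ}$-saddle splitting on each edge: at least two eigenvalues of the edge-wise Hessian of $V$ are strictly negative, while the positive-definite Gram matrix $\bar H(x^{*})^{\top}\bar H(x^{*})$ (guaranteed by Lemma~\ref{H_rank}) prevents the network-level coupling from cancelling these unstable directions. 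Hyperbolicity of every $x^{*}$ therefore makes its stable manifold an embedded submanifold of $\mathcal{S}$ of dimension strictly less than $\dim\mathcal{S}=3M$, hence a null set; the finite union over the undesired equilibria is still a null set, and combined with item (i) this delivers almost global asymptotic stability of $\mathcal{A}$. I expect this linearization/coupling step to be the most delicate part of the argument.
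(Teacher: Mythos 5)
Your item (i) is essentially the paper's argument verbatim: the same Lyapunov function $V(x)=\sum_{k}\mathrm{tr}(A(I_3-\bar R_k))$, the same identity giving $\sigma=-\bar H\Psi$, the same $\dot V=-2k_R\|\bar H\Psi\|^2$, and the same combination of LaSalle with Lemma~\ref{H_rank}. Item (ii) is correct but follows a genuinely different route: rather than linearizing, you use that $V$ is positive definite with respect to the singleton $\mathcal{A}$, that $\{\dot V=0\}=\Upsilon$ exactly (again by Lemma~\ref{H_rank}), and that the finitely many points of $\Upsilon\setminus\mathcal{A}$ carry a uniform positive $V$-gap, so a small enough sublevel set is forward invariant and contains no undesired equilibria. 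This is arguably cleaner than the paper's approach, which approximates the error dynamics ("neglecting cross terms") by the consensus protocol $\dot{\tilde r}^s=-\frac{k_R}{2}(\mathcal{L}\otimes\bar A)\tilde r^s$ with $\bar A=\mathrm{tr}(A)I_3-A$; the paper's route yields a local rate of convergence, yours yields an explicit invariant estimate of the region of attraction with no approximation step.

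The gap is in item (iii). First, the claim that "at least two eigenvalues of the edge-wise Hessian of $V$ are strictly negative" at every undesired equilibrium is false: at $\bar R_k=\mathcal{R}_\alpha(\pi,v_3^A)$, with $v_3^A$ the eigenvector of the largest eigenvalue, the Hessian of $\mathrm{tr}(A(I_3-\cdot))$ has eigenvalues $-\lambda_1^A-\lambda_2^A$, $\lambda_3^A-\lambda_2^A$, $\lambda_3^A-\lambda_1^A$, i.e.\ exactly one negative direction. Only "at least one" holds uniformly, and that is all you need. Second, and more seriously, the assertion that positive definiteness of $\bar H^{\top}\bar H$ "prevents the network-level coupling from cancelling these unstable directions," and hence that every $x^*\in\Upsilon\setminus\mathcal{A}$ is hyperbolic, is not an argument: the linearization of \eqref{continuous_sys} at $x^*$ is not the block-diagonal edge-wise Hessian pre-multiplied by $\bar H^{\top}\bar H$, hyperbolicity is not obviously true, and it is also more than you need --- a single eigenvalue with positive real part suffices, via the center-stable manifold theorem, to make the set of initial conditions attracted to $x^*$ a lower-dimensional (hence null) set. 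The paper avoids the coupling issue entirely by perturbing a single agent $i$ while freezing its neighbours at the equilibrium configuration, which yields the explicit system $\dot{\tilde r}_i^u=-\frac{k_R|\mathcal{N}_i|}{2}\bar{\bar A}\tilde r_i^u$ with $\bar{\bar A}=\mathrm{tr}(A\mathcal{R}_\alpha(\pi,v))I_3-(A\mathcal{R}_\alpha(\pi,v))^{\top}$, and then exhibits the unstable direction directly from $-v^{\top}\bar{\bar A}v=\mathrm{tr}(A)-v^{\top}Av>0$ for $v\in\mathcal{E}(A)$. You should replace your coupling claim with an explicit computation of this kind (and weaken "hyperbolic" to "has an unstable eigenvalue", invoking the center-stable rather than the stable manifold theorem).
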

\begin{proof}
It follows from (\ref{C_correcting_term}) that
\begin{equation}\label{correcting_term_tilde}
     \sigma_i = -\sum_{j\in \mathcal{N}_i}\psi(A\tilde R_j^T \tilde R_i).
\end{equation}
Since the graph $\mathcal{G}$ is an undirected graph with an orientation, one can verify that $\mathcal{N}_i=\mathcal{I}_i\cup\mathcal{O}_i$, with $\mathcal{I}_i:=\{j\in \mathcal{N}_i: j \hspace{0.1cm} \text{is the tail of the oriented edge} \hspace{0.1cm} (i,j) \in \mathcal{E}\}$ and $\mathcal{O}_i:=\{j\in \mathcal{N}_i: j \hspace{0.1cm} \text{is the head of the oriented edge} \hspace{0.1cm} (i,j) \in \mathcal{E}\}$. Therefore, it follows from \eqref{correcting_term_tilde} that
\begin{align}
    \sigma_i &= -\left( \sum_{j\in \mathcal{I}_i}\psi(A\tilde R_j^T \tilde R_i)+\sum_{j\in \mathcal{O}_i}\psi(A\tilde R_j^T \tilde R_i)\right)\nonumber\\
    &= -\left( \sum_{j\in \mathcal{I}_i}\psi(A\tilde R_j^T \tilde R_i)-\sum_{j\in \mathcal{O}_i}\psi(\tilde R_i^T \tilde R_j A)\right)\label{eq_1}\\
    &= -\left( \sum_{j\in \mathcal{I}_i}\psi(A\tilde R_j^T \tilde R_i)-\sum_{j\in \mathcal{O}_i}\tilde R_i^T \tilde R_j\psi(A\tilde R_i^T \tilde R_j )\right)\label{eq_2}\\
    &= -\left( \sum_{n\in \mathcal{M}_i^+}\psi(A\bar{R}_n)-\sum_{l\in \mathcal{M}_i^-}\bar{R}_l\psi(A\bar{R}_l)\right)\nonumber\\
    &= -\sum_{k=1}^M H_{ik}\psi(A\bar{R}_k),
\end{align}
where $H_{ik}$ is given in (\ref{H_bar}). Equations (\ref{eq_1}) and (\ref{eq_2}) are obtained using the facts that $\psi(BR)=-\psi(R^TB)$ and $\psi(GR)=R^T\psi(RG)$, $\forall G, B=B^T\in \mathbb{R}^{3\times3} $ and $R\in SO(3)$. Moreover, one can verify that
 \begin{equation}\label{sigma_c}
     \sigma = -\textit{\textbf{H}} \Psi,
 \end{equation}
 where $\Psi:=\left[\psi(A\bar R_1)^T, \psi(A\bar R_2)^T, \hdots, \psi(A\bar R_M)^T\right]^T \in \mathbb{R}^{3M}$. For the sake of simplicity, we write the block matrix $\textit{\textbf{H}}$ without the time argument. On the other hand, consider the following Lyapunov function candidate:
\begin{equation}
    V(x) = \sum_{k=1}^{M} \text{tr}\left(A(I_3-\bar{R}_k)\right),
\end{equation}
which is positive definite on $\mathcal{S}$ with respect to $\mathcal{A}$. Note that {\small $\Psi=\Bigg[\psi\left(\bar R^T_1\nabla_{\bar R_1} V\right)^T,\psi\left(\bar R^T_2\nabla_{\bar R_2} V\right)^T, \hdots,$ $ \psi\left(\bar R^T_M\nabla_{\bar R_M} V\right)^T\Bigg]^T \in \mathbb{R}^{3M}$}, where $\nabla_{\bar R_k} V$ is the gradients of $ V$ with respect to $\bar R_k$ for all $k \in \mathcal{M}$. The time-derivative of $ V(x)$, along the trajectories of the closed-loop system (\ref{continuous_sys}), is given by
\begin{align}
    \dot{V}(x)&=-k_R\sum_{k=1}^{M} \text{tr}\left(A\bar{R}_k[\bar \sigma_k]^\times\right)\nonumber\\
    &=2 k_R\sum_{k=1}^{M}\bar \sigma_k^T \psi(A\bar{R}_k).\nonumber
\end{align}
The last equality was obtained using the facts that $\text{tr}\left(B[x]^\times\right)=\text{tr}\left(\mathbb{P}_a(B)[x]^\times\right)$ and $\text{tr}\left([x]^\times [y]^\times\right)=-2x^Ty$, $\forall x, y\in \mathbb{R}^3$ and $\forall B \in \mathbb{R}^{3\times3}$. In view of (\ref{s_bar}), one obtains
\begin{align}
    \dot{V}(x)&=2 k_R\bar{\sigma}^T \Psi = 2 k_R\sigma^T \textit{\textbf{H}} \Psi.
\end{align}
Furthermore, since $\sigma = -\textit{\textbf{H}} \Psi$, one has
\begin{align}\label{V_neg}
    \dot{V}(x)=-2 k_R ||\textit{\textbf{H}}\Psi||^2 \leq 0.
\end{align}
Thus, the desired equilibrium set $\mathcal{A}$ for system (\ref{continuous_sys}) is stable. Moreover, since the closed-loop system (\ref{continuous_sys}) is autonomous, as per LaSalle’s invariance theorem, any solution $x$ to the closed-loop system (\ref{continuous_sys}) must converge to the largest invariant set contained in the set characterized by $\dot{V}(x)=0$, \ie, $\textit{\textbf{H}}\Psi=0$. According to Lemma \ref{lem_H}, $\textit{\textbf{H}}\Psi=0$ implies $\Psi=0$. This also implies that
\begin{align}\label{eqq}
    A \bar R_k = \bar R_k^T A,
\end{align}
for every $k \in \mathcal{M}$. Thus, every solution $x$ of system (\ref{continuous_sys}) must converge to the set $\Upsilon$. This completes the proof of item (\ref{set_of_equilibrium}). Moreover, since $A$ is a real symmetric matrix, one can decompose $A$ as $A=U \Lambda U^T$ where $\Lambda=\text{diag}(\lambda_1, \lambda_2, \lambda_3)$ with $\lambda_1, \lambda_2$ and $\lambda_3$ are the distinct eigenvalues of $A$ and $U \in O(3)$. Using steps similar to \cite{mahony_tac2008} along with the fact that $A=U \Lambda U^T$, one can show that equation \eqref{eqq} further implies that $\bar R_k \in \{I_3, U D_1 U^T, U D_2 U^T, U D_3 U^T\}$, for every $k \in \mathcal{M}$, where $D_1=\text{diag}(1,-1,-1)$, $D_2=\text{diag}(-1,1,-1)$, $D_3=\text{diag}(-1,-1,1)$. It follows that the set of all undesired equilibrium points can be explicitly written as follows $\Upsilon \setminus \mathcal{A}=\{x \in \mathcal{S}: \bar R_m=I_3, \bar R_n=UD_{\hspace{-0.05cm}\beta}~U^T, ~m\in \mathcal{M}^{I},~ n \in \mathcal{M}^\pi, ~\beta\in\{1,2,3\}\}$ where $\mathcal{M}^I \cup \mathcal{M}^\pi=\mathcal{M}$, $|\mathcal{M}^\pi|>0$, $|\mathcal{M}^I|\geq0$.\\
Now, we will establish the stability properties of each equilibrium set. We start with the desired equilibrium set $\mathcal{A}$, and we set $\Tilde{R}_i=R_c \exp{\left([\tilde r^s_i]^\times\right)}$,  where $\tilde r^s_i\in \mathbb{R}^3$ is sufficiently small and $R_c \in SO(3)$ is an arbitrary constant rotation matrix. Considering the later expression of $\tilde R_i$ together with the fact that $\exp{\left([y]^\times\right)}\approx I_3+[y]^\times$, for sufficiently small $y$, one can get the following first-order approximation of $\tilde R_i$ around the desired equilibrium set $\mathcal{A}$:
\begin{equation}\label{first_order_app}
    \tilde R_i \approx R_c \left(I_3+[\tilde r_i^s]^\times \right),
\end{equation}
for every $i \in \mathcal{V}$. Moreover, it follows from \eqref{first_order_app}, with the fact that $A$ is symmetric, that
\begin{align}
    \mathbb{P}_a(A\tilde R_j^T \tilde R_i) \approx& \frac{1}{2}\bigg(A\left(I_3-[\tilde r_j]^\times\right) R_c^T R_c \left(I_3 + [\tilde r_i]^\times\right)\nonumber\\
    &-\left(I_3-[\tilde r_i]^\times\right) R_c^T R_c \left(I_3+[\tilde r_j]^\times\right) A\bigg).
\end{align}
Since we are only interested in the first-order approximation of the estimated attitude errors, the last equation can be simplified by neglecting the cross terms as follows:
{\small
\begin{align}
    \mathbb{P}_a(A\tilde R_j^T \tilde R_i) \approx \frac{1}{2}\bigg(\left(A[\tilde r_i]^\times + [\tilde r_i]^\times A\right)-\left(A[\tilde r_j]^\times+[\tilde r_j]^\times A\right)\bigg).
\end{align}}Furthermore, using the fact that $B [z]^\times + [z]^\times B^T = \left[\left(tr(B)I_3-B^T\right) z\right]^\times$, $ \forall z\in \mathbb{R}^3$, $\forall B \in \mathbb{R}^{3\times3}$, one has
\begin{equation}\label{p_lin}
    \mathbb{P}_a(A\tilde R_j^T \tilde R_i) \approx \frac{1}{2}\left[\bar A (\tilde r_i - \tilde r_j)\right]^\times,
\end{equation}
where $\Bar{A}:= \text{tr}(A)I_3-A$. From \eqref{first_order_app} and \eqref{p_lin}, one can derive the follwing linearization of \eqref{absolute_attitude_error}:
\begin{equation}
    R_c [\dot{\tilde r}_i]^\times = - \frac{k_R}{2} R_c \sum_{j\in \mathcal{N}_i}\left[\bar A (\tilde r_i - \tilde r_j)\right]^\times.
\end{equation}
where $i\in \mathcal{V}$. After some trivial mathematical manipulations, the following dynamics of $\tilde r^s_i$ are obtained:
\begin{equation}\label{cons_dyn}
    \Dot{\tilde r}^s_i=-\frac{k_R}{2} \Bar{A} \sum_{j\in \mathcal{N}_i} (\tilde r^s_i - \tilde r^s_j).
\end{equation}
 Equation \eqref{cons_dyn} represents the classical consensus protocol for multi-agent systems \cite{Ren_book, Mesbahi_book}. Note that, at the equilibrium point of system \eqref{cons_dyn} (\ie, $\tilde r_i^s = \tilde r_j^s$, $\forall i, j \in \mathcal{V}$), one has $\bar R_k = I_3$, for all $k \in \mathcal{M}$, which in turns implies that  $(\bar R_1, \bar R_2, \hdots, \bar R_M) \in \mathcal{A}$. Therefore, to show local asymptotic stability of the desired equilibrium set $\mathcal{A}$ , one has to show that the equilibrium point $\tilde r_i^s = \tilde r_j^s$, for all $i \in \mathcal{V}$ and $j\in \mathcal{N}_i$, of the system \eqref{cons_dyn}, is asymptotically stable. Defining $\tilde r^s:= \left[(\tilde r^s_1)^T, (\tilde r^s_2)^T, \hdots, (\tilde r^s_N)^T \right]^T$, it follows from \eqref{cons_dyn} that
\begin{align}\label{cons_equ}
    \Dot{\tilde r}^s&=-\frac{k_R}{2} \left(I_N \otimes \Bar{A}\right) \left( L\otimes I_3\right) \tilde r^s\nonumber\\
    &=-\frac{k_R}{2}\left( L\otimes \Bar{A}\right) \tilde r^s,
\end{align}
where $L= H H^T \in \mathbb{R}^{N\times N}$ is the Laplacian matrix corresponding to the graph $\mathcal{G}$. Since  $\mathcal{G}$ is undirected and connected (as per Assumption \ref{graph_ass}) and the matrix $A$ is positive definite with three distinct eigenvalues, it follows that the equilibrium point $\tilde r^s = \textbf{1}_N \otimes r_c$, for the multi-agent system \eqref{cons_equ}, is asymptotically stable, where $r_c = \frac{1}{N} \sum_{j=1}^N \tilde r^s_j(0)$. Consequently, the set $\mathcal{A}$ is locally
asymptotically stable. This completes the proof of item (\ref{A_local_stability}).\\
To prove item \eqref{stability_of_equilibrium}, we first evaluate the \textit{Hessian} of $V(x)$, denoted by $\text{\textit{Hess}}V(x)$, to determine the nature of the critical points of $V(x)$ that belong to the set $\Upsilon \setminus \mathcal{A}$. Given an open interval $\mathbb{O} \subset \mathbb{R}$ containing zero in its interior, $\forall k\in \mathcal{M}$, one defines a smooth curve $\varphi_k: \mathbb{O} \rightarrow SO(3)$ such that $\varphi_k(t)=\Bar{R}^*_k \exp{\left(t [\zeta_k]^\times\right)}$ where $\zeta_k \in \mathbb{R}^3$ and $x^*=(\bar R^*_1, \bar R^*_2, \hdots, \bar R^*_M) \in \Upsilon \setminus \mathcal{A}$. Let $x_\varphi (t):= \left(\varphi_1(t), \varphi_2(t), \hdots, \varphi_M(t)\right) \in \mathcal{S}$, one has
\begin{align}\label{hess}
    \frac{d}{dt}V(x_\varphi)=&-\sum_{k=1}^{M} \text{tr}\left(A\bar{R}^*_k\exp{\left(t [\zeta_k]^\times\right)}[\zeta_k]^\times\right)\nonumber\\
    \frac{d^2}{dt^2}V(x_\varphi)=&-\sum_{k=1}^{M} \text{tr} \left(A\bar{R}^*_k\exp{\left(t [\zeta_k]^\times\right)}\left([\zeta_k]^\times\right)^2\right)\nonumber\\
    &-\sum_{k=1}^{M} \text{tr}\left(A\bar{R}^*_k\exp{\left(t [\zeta_k]^\times\right)}[\dot \zeta_k]^\times\right).
\end{align}
Since $x_\varphi(0)=x^*$, one verifies $\mathbb{P}_a(A\bar R^*_k)=0$ for every $k \in \mathcal{M}$. Consequently, it follows from (\ref{hess}) that 
\begin{align}
    \left.\frac{d^2}{dt^2}V(x_\varphi)\right|_{t=0}&=-\sum_{k=1}^{M} \text{tr}\left(A\bar R_k^* \left([\zeta_k]^\times\right)^2\right).
\end{align}
Using the fact $\left( [z]^\times\right)^2=-z^T z I_3 + zz^T$ and $\text{tr}(z_1 z_2^T)= z_1^T z_2$, $\forall z, z_1, z_2 \in \mathbb{R}^3$, one obtains 
\begin{align}\label{hess2}
    \left.\frac{d^2}{dt^2}V(x_\varphi)\right|_{t=0}&=\sum_{k=1}^{M} \zeta^T_k \left( \text{tr}(A\bar R^*_k)I_3 - A \bar R^*_k\right) \zeta_k\nonumber\\
    &= \sum_{k=1}^{M} \zeta^T_k A^*_k \zeta_k=\zeta^T \textit{\textbf{A}}^* \zeta,
\end{align}
where $A^*_k=\text{tr}(A\bar R^*_k)I_3 - A \bar R^*_k$, $\zeta = [\zeta_1^T, \zeta_2^T, \hdots, \zeta_M^T]^T \in \mathbb{R}^{3M}$ and $\textit{\textbf{A}}^* = \text{diag}(A^*_1, A^*_2, \hdots, A^*_M)\in \mathbb{R}^{3M\times 3M}$. In view of (\ref{hess2}), according to \cite{Mahony_book_OAMM}, one has $\textit{Hess}V(x)=\textit{\textbf{A}}^*$ for every $x \in \Upsilon\setminus \mathcal{A}$. 
In other words, the matrix $\textit{\textbf{A}}^*$ represents the \textit{Hessian} of $V(x)$ evaluated at the undesired equilibrium points. It is worth noting that the eigenvalues of the matrix $\textit{\textbf{A}}^*$ are actually the eigenvalues of the matrices $A^*_k$, for every $k \in \mathcal{M}$. Therefore, as a next step, we will explicitly find the eigenvalues of the matrices $A^*_k$, for every $k \in \mathcal{M}$. Using the fact that $A=U\Lambda U^T$, recall that $U^T U=I_3$ and $\Lambda = \text{diag}(\lambda_1, \lambda_2, \lambda_3)$ with $\lambda_1\neq \lambda_2\neq\lambda_3$ (according to Assumption \ref{A_ass}), one has $A^*_k =U\left(\text{tr}(\Lambda U^T \bar R^*_k U)I_3-\Lambda U^T \bar R^*_k U\right)U^T$. Now, for every $m \in \mathcal{M}^I$, one can verify that $\text{tr}(\Lambda U^T \bar R^*_m U)I_3-\Lambda U^T \bar R^*_m U = \text{diag}(\lambda_2+\lambda_3, \lambda_1+\lambda_3, \lambda_1+\lambda_2)$. On the other hand, for every $n \in \mathcal{M}^\pi$, one can verify that $\text{tr}(\Lambda U^T \bar R^*_n U)I_3-\Lambda U^T \bar R^*_n U \in \{\text{diag}(-\lambda_2-\lambda_3, \lambda_1-\lambda_3, \lambda_1-\lambda_2), \text{diag}(\lambda_2-\lambda_3, -\lambda_1-\lambda_3, \lambda_2-\lambda_1), \text{diag}(\lambda_3-\lambda_2, \lambda_3-\lambda_1, -\lambda_1-\lambda_2)\}$. Since $\lambda_1\neq \lambda_2\neq\lambda_3$, it follows that the eigenvalues of the matrix $\textit{\textbf{A}}^*$ are either all negative or some of them are positive and some are negative. Consequently, the critical points of $V(x)$ in $\Upsilon \setminus \mathcal{A}$ are either global maxima or saddle points of $V(x)$.\\
Now, we will show that the critical points of $V(x)$ in the set $\Upsilon \setminus \mathcal{A}$ are unstable.
Consider the following real-valued function $\bar V: SO(3)^M \rightarrow \mathbb{R}$ inspired from \cite{vantran2019pose_arXiv}:
\begin{align}\label{V_bar_1}
    \bar V(x)=& 2\sum_{n \in \mathcal{M}^\pi}(\lambda_{p_n}+\lambda_{d_n})- V(x).
\end{align}
where $\lambda_{p_n}$ and $\lambda_{d_n}$ are two distinct eigenvalues of $A$, \ie, $p_n, d_n \in \{1, 2, 3\}$ such that $p_n \neq d_n$. 
Let us consider an equilibrium point $x^* \in \Upsilon \setminus \mathcal{A}$ such that $\bar R_n = U D_{l_n} U^T$, $n \in \mathcal{M}^\pi$, where $l_n \in \{1, 2, 3\}$ such that $l_n\not=p_n$ and $l_n\not=d_n$.
It is clear that $\bar V(x^*)=0$. Now, define the following set $\mathbb{B}_r:=\{ (\bar R_1, \bar R_2, \hdots, \bar R_M) \in \mathcal{S}:~|\bar R_1^T \bar R_1^*|_I+|\bar R_2^T \bar R_2^*|_I+ \hdots+ |\bar R_M^T \bar R_M^*|_I \leq r\}$ where $r >0$. Since the set $\Upsilon \setminus \mathcal{A}$ contains only global maxima or saddle points of $V(x)$, one can verify that the set $\mathbb{U}=\{x \in \mathbb{B}_r:~\bar V(x)>0\}$ is not empty. Furthermore, it follows from \eqref{V_neg} and \eqref{V_bar_1} that $\dot{\bar V}(x)=-\dot{V}(x)>0$. Consequently, any trajectory that initially starts in the set $U$ must leave the set $U$. Thus, according to \textit{Chetaev's} theorem \cite{khalil2002nonlinear}, one concludes that all points that belong to the undesired equilibrium set $\Upsilon \setminus \mathcal{A}$ are unstable. By virtue of the stable manifold theorem \cite{Perko_book}, one can conclude that the stable manifold associated to the undesired equilibrium set $\Upsilon \setminus \mathcal{A}$ has zero Lebesgue measure, and as such, the equilibrium set $\mathcal{A}$ is almost globally asymptotically stable. This completes the proof of item (\ref{stability_of_equilibrium}).
\end{proof}
\begin{rmk}
     In contrast to \cite{lee_2016, lee_auto2016, Tran_CDC2018, lee_2019,li_2020}, our proposed continuous attitude observer can estimate time-varying orientations. Moreover, the estimated orientations provided by our proposed scheme are well-defined for any instant of time, which is not the case in \cite{Tran_TCNS2019,lee_2019,li_2020}, since the reliable attitude estimates are obtained only at the steady state. This makes our proposed distributed attitude estimation scheme a strong candidate for use in applications that require instantaneous orientations for feedback.
\end{rmk}

\begin{rmk}\label{R1}
As shown in the proof of Theorem \ref{theorem1}, the trajectories of the closed-loop system (\ref{continuous_sys}) may converge to a level set containing the undesired equilibrium set $\Upsilon \setminus \mathcal{A}$. Unfortunately, due to the topological obstruction on $SO(3)$, there is no continuous distributed attitude estimation scheme that guarantees global asymptotic stability of the desired equilibrium set $\mathcal{A}$ \cite{Koditschek}. Therefore, in the sequel, we will propose a hybrid distributed attitude estimation scheme endowed with global asymptotic stability.
\end{rmk}

\subsection{Hybrid Distributed Attitude Estimation}
The design of the correcting term \eqref{sigma_c} was based on the gradient of the smooth potential function $ V(x)$. However, this design does not guarantee the global stability of the desired equilibrium set $\mathcal{A}$. This is due to the fact that the potential function $ V(x)$ has more than one critical set ($\mathcal{A}$ and $\Upsilon\setminus\mathcal{A}$). It is well known that the design of gradient-based laws using smooth potential functions on $SO(3)$ leads to the problem mentioned above \cite{Morse_book}. This has motivated many authors to propose hybrid gradient-based solutions that ensure the existence of a unique global attractor \cite{Mayhew_ACC2011, Mayhew_ACC2011_2, Mayhew_TAC2013, 7462234}. The key idea in these solutions is to switch between a family of smooth potential functions via an appropriate switching mechanism, which leads to generating a non-smooth gradient with only one global attractor. Note that the construction of this family of smooth potential functions relies on the compactness assumption of the manifold. Therefore, the hybrid gradient-based solutions proposed in \cite{Mayhew_ACC2011, Mayhew_ACC2011_2, Mayhew_TAC2013, 7462234} are not applicable to non-compact manifolds such as $SE(3)$. Recently, the authors of \cite{miaomiao_TAC2022} proposed a new hybrid scheme that relies only on one potential function parameterized by a scalar variable governed by hybrid dynamics. According to an appropriate switching mechanism applied to this variable, the potential function can be adjusted so that the generated non-smooth gradient has only one global attractor. In contrast to \cite{Mayhew_ACC2011, Mayhew_ACC2011_2, Mayhew_TAC2013, 7462234}, the hybrid scheme, given in \cite{miaomiao_TAC2022}, is easy to design and does not require any assumption about the compactness of the manifold.\\
Inspired by \cite{miaomiao_TAC2022},  we introduce a new potential function $U_R$, on $SO(3)^M\times\mathbb{R}^M$, by augmenting the argument of $ V$ with some auxiliary time-varying scalar variables. Designing an appropriate switching mechanism to govern these auxiliary variables will allow the generation of non-smooth gradient-based laws with only one global attractor.\\
\subsubsection{Switching Mechanism Design for Multi-agent Systems}
\indent Let $\mathcal{A}_h:=\{x_h\in \mathcal{S}_h: \forall k\in \mathcal{M}, \hspace{0.1cm} \bar{R}_k=I_3, \xi_k=0\}$, where $x_h:= \left(\bar R_1, \hdots, \bar R_M, \xi_1, \hdots, \xi_M\right)\in \mathcal{S}_h$ with $\mathcal{S}_h:=SO(3)^M\times\mathbb{R}^M$. Consider the following potential function, on $\mathcal{S}_h$, with respect to $\mathcal{A}_h$:
\begin{align}\label{potential_fct}
    U_R(x_h) &= \sum_{k=1}^{M}U(\bar R_k,\xi_k),
\end{align} 
where $U: SO(3)\times \mathbb{R} \rightarrow \mathbb{R}_{\geq0}$ is a potential function with respect to $(I_3, 0)$. The following set represents the set of all critical points of $U_R$:
\begin{equation}\label{equi_set_}
    \Upsilon_h := \{x_h \in \mathcal{S}_h: \forall k\in \mathcal{M}, \nabla_{\bar R_k}U_R=0~\text{and}~\nabla_{\xi_k}U_R=0\},
\end{equation}
where $\nabla_{\bar R_k}U_R$ and $\nabla_{\xi_k}U_R$ are the gradients of $U_R$ with respect to $\bar R_k$ and $\xi_k$, respectively. The potential function $U_R$ is chosen such that $\mathcal{A}_h \subset \Upsilon_h$. In what follows, inspired by \cite{miaomiao_TAC2022}, we will introduce an essential condition for our hybrid scheme design related to the potential function $U_R$.
\begin{cnd}\label{hybrid_ass}
    Consider the potential function \eqref{potential_fct}. There exist a scalar $\delta>0$ and a nonempty finite set $\Xi \subset \mathbb{R}$ such that for every $x_h \in \Upsilon_h \setminus \mathcal{A}_h$
     \begin{equation}\label{cond1}
        U(\bar R_k, \xi_k) - \underset{\bar{\xi}_k \in \Xi}{\text{min}}~U(\bar R_k, \bar{\xi}_k) > \delta,
    \end{equation}
   for every $k \in \mathcal{M}$ such that $\bar R_k \neq I_3$. 
\end{cnd}
\begin{rmk}
    Note that the set $\Upsilon_h\setminus \mathcal{A}_h$ denotes the set of all undesired critical points of $U_R$. Condition \ref{hybrid_ass} plays a key role in the design of the hybrid scheme to be introduced shortly, since it implies that, at the undesired critical set $\Upsilon_h \setminus \mathcal{A}_h$, there will always exist $\bar \xi_k \in \Xi$, for every $k \in \mathcal{M}$, where $\bar R_k \neq I_3$, such that $U(\bar R_k, \bar{\xi}_k)$ is lower than $U(\bar R_k, \xi_k)$ by a constant gap $\delta$. Thus, resetting the value of $\xi_k$ to $\bar \xi_k$ will effectively steer the state away from the undesired critical set $\Upsilon_h \setminus \mathcal{A}_h$. This, together with an appropriate design of the vector field that ensures that $U_R$ is non-increasing during the flows, will guarantee global asymptotic stability of the desired equilibrium set $\mathcal{A}_h$.
\end{rmk}
    Now, for every $i \in \mathcal{V}$ and $k \in \mathcal{M}_i^+$, we propose the following hybrid dynamics for $\xi_k$:
    \begin{align}
     &\underbrace{
         \begin{aligned}
             \dot{\xi}_k =&-k_\xi \nabla_{\xi_k}U_R
         \end{aligned}
        }_{x_h \in \mathcal{F}_i} \label{H_observer_dynamics_1}
        \\
     &\underbrace{        
        \xi_k^+\in 
        \begin{cases}
        \xi_k~\text{if}~~U(\bar{R}_k, \xi_k)-U(\bar{R}_k, \xi^*_k) \leq \delta\\
        \xi^*_k~\text{if}~~U(\bar{R}_k, \xi_k)-U(\bar{R}_k, \xi^*_k) \geq \delta,
        \end{cases}}_{x_h \in \mathcal{J}_i}\label{H_observer_dynamics_2}
\end{align}
where $k_\xi >0$ and $\xi^*_k := \text{arg} \underset{\bar{\xi}_k \in \Xi}{\text{min}} U(\bar{R}_k, \bar{\xi}_k)$. The flow set $\mathcal{F}_i$ and the jump set $\mathcal{J}_i$, for agent $i$, are defined as follows:
{\small
\begin{align}    
    \mathcal{F}_i&:=\{x_h\in \mathcal{S}_h : \forall k\in \mathcal{M}_i^+,\hspace{0cm} U(\bar{R}_k,\xi_k)-\underset{\bar{\xi}_k\in \Xi}{\text{min}} U(\bar{R}_k,\bar{\xi}_k)\leq \delta\},\nonumber\\
    \mathcal{J}_i&:=\{x_h \in \mathcal{S}_h: \exists k\in \mathcal{M}_i^+, 
    \hspace{0cm} U(\bar{R}_k,\xi_k)-\underset{\bar{\xi}_k\in \Xi}{\text{min}} U(\bar{R}_k,\bar{\xi}_k)\geq \delta\}.\nonumber
\end{align}}It is important to note that for each $i \in \mathcal{V}$, both the flow set $\mathcal{F}_i$ and the jump set $\mathcal{J}_i$ are distributed. In other words, for each $i \in \mathcal{V}$, the definition of both sets, namely $\mathcal{F}_i$ and $\mathcal{J}_i$, considers only the edges connecting agent $i$, where $i$ is the head of these oriented edges, with its neighbors $j \in \mathcal{N}_i$. Now, let $\xi:=[\xi_1, \xi_2, \hdots, \xi_M]^T \in \mathbb{R}^M$, from \eqref{H_observer_dynamics_1}-\eqref{H_observer_dynamics_2}, one obtains the following hybrid dynamics
\begin{equation}\label{h_xi}
\mathcal{H}_\xi:\begin{cases} {\dot \xi = F_\xi(x_h)}&{x_h \in {\mathcal{F}}} \\ {{\xi^ + } \in G_\xi(x_h)}&{x_h \in {\mathcal{J}}} \end{cases}
\end{equation}
where
\begin{eqnarray}\label{network_f_j_set}
\mathcal{F}:=\bigcap_{i=1}^{N}\mathcal{F}_{i},\qquad \mathcal{J}:=\bigcup_{i=1}^{N}\mathcal{J}_{i},
\end{eqnarray}
and
\begin{equation}
    F_\xi(x_h) = \left[ {\begin{array}{c} {{-k_\xi \nabla_{\xi_1}U_R}} \\ \vdots\\ {-k_\xi \nabla_{\xi_M}U_R} \end{array}}\right],~G_\xi(x_h) = \left[ {\begin{array}{c} {{\{\xi_1, \xi_1^*\}}} \\ \vdots\\ {\{\xi_M, \xi_M^*\}} \end{array}}\right]\nonumber,
\end{equation}
\begin{rmk}
   Based on the definition of the jump set $\mathcal{J}$, one can verify that the set of all undesired critical points belongs to the jump set $\mathcal{J}$, \ie, $\Upsilon_h \setminus \mathcal{A}_h \subset \mathcal{J}$. Also, in view of the jump map $G_\xi(x_h)$, one can check that, if $x_h \in \mathcal{J}$, there exits $k \in \mathcal{M}$ such that $U(\bar{R}_k, \xi_k)-U(\bar{R}_k, \xi^*_k) \geq \delta$, which ensures that $U_R$ decreases at least by $\delta$ after each jump.\\ 
\end{rmk}

\subsubsection{Generic Hybrid Attitude Observer}
Consider the following hybrid attitude observer for every $i \in \mathcal{V}$:
\begin{align}
     \underbrace{
         \begin{aligned}
             \dot{\hat{R}}_i=\hat{R}_i[\omega_i-k_R\hat{R}_i^T \sigma_i]^\times
         \end{aligned}
        }_{x_h \in \mathcal{F}_i} 
        ~~~~~~~~~~
     \underbrace{        
        \hat R_i^+=\hat R_i,
        }_{x_h \in \mathcal{J}_i}\label{TH_observer_dynamics_1}
\end{align}
with the following distributed gradient-based correcting term:
{\small
\begin{equation}\label{TH_observer_dynamics_2}
\sigma_i = \sum_{l\in \mathcal{M}_i^-} \bar R_l \psi \left(\bar R_l^T \nabla_{\bar R_l}U_R(x_h)\right)-\sum_{n\in \mathcal{M}_i^+} \psi \left(\bar R_n^T \nabla_{\bar R_n}U_R(x_h)\right)
\end{equation}}where the hybrid dynamics of $\xi_k$, $k \in \mathcal{M}_i^+$, are given in \eqref{H_observer_dynamics_2}. Moreover, given the above expression of $\sigma_i$, for every $i\in \mathcal{V}$, together with equation \eqref{s_bar}, one can verify that
    \begin{equation}\label{relative_correcting_term}
      \bar \sigma = - \textit{\textbf{H}}^T\textit{\textbf{H}}~\Psi_\nabla^{\bar R},
    \end{equation}
    where {\small $\Psi_\nabla^{\bar R} :=\Bigg[\psi\left(\bar R^T_1\nabla_{\bar R_1}U_R\right)^T,\psi\left(\bar R^T_2\nabla_{\bar R_2}U_R\right)^T, \hdots,$ $ \psi\left(\bar R^T_M\nabla_{\bar R_M}U_R\right)^T\Bigg]^T \in \mathbb{R}^{3M}$} and the block matrix $\textit{\textbf{H}}$ is given in (\ref{H_bar}). The matrix $\textit{\textbf{H}}^T \textit{\textbf{H}}$ is positive definite according to Lemma \ref{lem_H}. Next, we will establish the stability property of the desired equilibrium set $\mathcal{A}_h$ considering the distributed hybrid attitude observer \eqref{TH_observer_dynamics_1}-\eqref{TH_observer_dynamics_2}. In view of \eqref{absolute_attitude_error}-\eqref{s_bar}, \eqref{H_observer_dynamics_2} and \eqref{TH_observer_dynamics_1}-\eqref{relative_correcting_term}, one can derive the following multi-agent hybrid closed-loop dynamics:  
\begin{equation}\label{hybrid_sys}
\mathcal{H}:\begin{cases} {\dot x_h = F(x_h)}&{x_h \in {\mathcal{F}}} \\ {{x_h^ + } \in G(x_h)}&{x_h \in {\mathcal{J}}} \end{cases}
\end{equation}
where
\begin{equation}
    F(x_h) = \left[ {\begin{array}{c} {k_R\bar{R}_1[\bar\sigma_1]^\times} \\ \vdots\\ {k_R\bar{R}_M[\bar\sigma_M]^\times}\\
    {-k_\xi\nabla_{\xi_1}U_R}\\
    \vdots\\{-k_\xi\nabla_{\xi_M}U_R}
    \end{array}}\right],~G(x_h) = \left[ {\begin{array}{c} {\bar{R}_1} \\ \vdots\\ {\bar{R}_M} \\ \{  \xi_1, \xi^*_1\} \\ \vdots\\\{\xi_M, \xi^*_M\}  \end{array}}\right]\nonumber.
\end{equation}
The flow set $\mathcal{F}$ and the jump set $\mathcal{J}$ are defined in \eqref{network_f_j_set}. It is worth noting that the hybrid closed-loop system (\ref{hybrid_sys}) is autonomous.
\begin{rmk}
    According to the flow map $F(x_h)$, the dynamics of the state $x_h$ flow along a negative direction of the gradient of $U_R$, driving the state $x_h$ towards the critical points of $U_R$ during the flow. However, the jump map $G(x_h)$ pushes the state $x_h$ away from the undesired critical set $\Upsilon_h \setminus \mathcal{A}_h$, which leaves the desired critical set $\mathcal{A}_h$ as a global attractor to our proposed distributed hybrid correcting scheme.
\end{rmk}

Before proceeding with the stability analysis, it is important to verify that system \eqref{hybrid_sys} is well-posed\footnote{See \cite[Definition 6.2]{goebel2012hybrid} for the definition of well-posedness.}. This involves showing that the hybrid closed-loop system (\ref{hybrid_sys}) satisfies the hybrid basic conditions \cite[Assumption 6.5]{goebel2012hybrid}. 
\begin{lem}\label{lem_hbc}
    The hybrid closed-loop system (\ref{hybrid_sys}) satisfies the following hybrid basic conditions:
    \begin{enumerate}[i)]
        \item $\mathcal{F}$ and $\mathcal{J}$ are closed subsets;\label{hbc_1}
        \item $F$ is outer semicontinuous and locally bounded relative to $\mathcal{F}$,
           $\mathcal{F} \in \text{dom}~F$ , and $F(x_h)$ is convex for every $x_h \in \mathcal{F}$;\label{hbc_2}
        \item $G$ is outer semicontinuous and locally bounded relative to $\mathcal{J}$
           and $\mathcal{J} \in \text{dom}~G$.\label{hbc_3}  
    \end{enumerate}
\end{lem}
\begin{proof}
    Since the function $U$ is continuous, one can verify that the flow set $\mathcal{F}$ and the jump set $\mathcal{J}$, given in \eqref{network_f_j_set}, are closed sets. Moreover, one has $\mathcal{F}\cup \mathcal{J} = \mathcal{S}_h$. \\
    The outer semicontinuity, local boundedness, and convexity properties of the flow map $F$ follow from the fact that $F$ is a single-valued continuous function.\\
    Using the fact that $U$ is continuous on $SO(3) \times \mathbb{R}$, one can show, following similar arguments as in \cite[Proof of Lemma 1]{Casau_TAC2019}, that $\rho_k(\bar R_k):= \text{arg} \min_{\bar{\xi}_k \in \Xi} U(\bar{R}_k, \bar{\xi}_k)$ for every $k \in \mathcal{M}$ is outer semicontinuous. Furthermore, it can be verified that for every $i \in \mathcal{V}$ and $k \in \mathcal{M}_i^+$, the set-valued mapping given in \eqref{H_observer_dynamics_2} has a closed graph relative to $\mathcal{J}_i$. According to \cite[Lemma 5.10]{goebel2012hybrid}, this implies that the set-valued mapping \eqref{H_observer_dynamics_2} is outer semi-continuous relative to $\mathcal{J}_i$. Consequently, in conjunction with the fact that $\rho_k(\bar R_k)$ is outer semi-continuous, it follows that the jump map $G$ is outer semi-continuous relative to $\mathcal{J}$. The local boundedness of $G$ relative to $\mathcal{J}$ follows from the fact that $\xi_k^*$, for every $k\in \mathcal{M}$, takes values over a finite discrete set $\Xi$ and the remaining components of $G$ are single-valued continuous functions on $\mathcal{J}$.
\end{proof}
In the following theorem, we will establish the stability properties of the multi-agent hybrid closed-loop system \eqref{hybrid_sys}.
\begin{thm}\label{theorem2}
   Let $k_R, k_\xi >0$ and suppose that Assumptions \ref{hybrid_ass}, \ref{graph_ass} and Condition \ref{hybrid_ass} hold. Then, the set $\mathcal{A}_h$ is globally asymptotically stable for the multi-agent hybrid closed-loop system (\ref{hybrid_sys}) and the number of jumps is finite. 
\end{thm}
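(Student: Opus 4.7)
The plan is to use $U_T$ as a hybrid Lyapunov function for (\ref{hybrid_sys}): show it is non-increasing along the flow, show that every jump forces a uniform drop of at least $\delta$, deduce finiteness of the jump count, and then invoke the hybrid invariance principle \cite{goebel2012hybrid} together with Condition \ref{hybrid_ass} and Lemma \ref{H_rank} to localize the $\omega$-limit set in $\mathcal{A}_h$. Since $U_T$ is positive definite on $\mathcal{S}_h$ with respect to $\mathcal{A}_h$, Lyapunov stability of $\mathcal{A}_h$ will follow immediately, so the real work is in the attractivity step.

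The first computation would be the time derivative of $U_T$ along the flow map $F$. Reusing the trace identities from the proof of Theorem \ref{theorem1} for the rotational contribution and applying the chain rule to the $\xi$-terms yields
\begin{equation*}
    \dot U_T(x_h) = 2 k_R \bar\sigma^{T} \Psi_{\nabla}^{\bar R} - k_\xi \sum_{k=1}^{M} (\nabla_{\xi_k} U_T)^2.
\end{equation*}
Substituting (\ref{relative_correcting_term}) produces $\dot U_T(x_h) = -2 k_R \|\bar H\, \Psi_{\nabla}^{\bar R}\|^2 - k_\xi \sum_{k=1}^{M} (\nabla_{\xi_k} U_T)^2 \le 0$, with equality only when $\bar H\, \Psi_{\nabla}^{\bar R}=0$ and $\nabla_{\xi_k} U_T = 0$ for every $k$. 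Lemma \ref{H_rank} upgrades $\bar H\, \Psi_{\nabla}^{\bar R}=0$ into $\Psi_{\nabla}^{\bar R} = 0$, so $\dot U_T = 0$ forces $x_h \in \Upsilon_h$.

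Next, inspection of the jump map (\ref{H_observer_dynamics_2}) and the jump set definition shows that $U_T(x_h^+) \le U_T(x_h) - \delta$ at every jump: if $x_h \in \mathcal{J}$ there exists $k\in\mathcal{M}$ with $U(\bar R_k,\xi_k)-U(\bar R_k,\xi_k^*)\ge \delta$, and that $\xi_k$ is reset to $\xi_k^*$ while the remaining components are either unchanged or also moved in a direction that decreases $U$. Combined with non-increase along flows and $U_T \ge 0$, this bounds the number of jumps by $U_T(x_h(0,0))/\delta < \infty$, proving the finiteness claim and simultaneously ruling out Zeno behavior, so every maximal solution is complete with only finitely many jumps followed by eternal flow.

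After the last jump the solution flows in a sublevel set of $U_T$, hence is precompact, and by the hybrid invariance principle its $\omega$-limit set lies in the largest invariant subset of $\Upsilon_h \cap \mathcal{F}$. For every $x_h \in \Upsilon_h \setminus \mathcal{A}_h$, Condition \ref{hybrid_ass} gives $U(\bar R_k,\xi_k) - \min_{\bar\xi_k\in\Xi} U(\bar R_k,\bar\xi_k) > \delta$ for \emph{every} $k\in\mathcal{M}$; in particular, for any agent $i$ that is the head of at least one edge, the inequality defining $\mathcal{F}_i$ is strictly violated, so $x_h \notin \mathcal{F}$. Consequently $\Upsilon_h \cap \mathcal{F} = \mathcal{A}_h$, the $\omega$-limit set is contained in $\mathcal{A}_h$, and together with Lyapunov stability one obtains global asymptotic stability. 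The principal obstacle will be the hybrid-invariance step: one must verify the hybrid basic conditions (already observed in the excerpt), establish precompactness of complete solutions carefully, and check that $\Upsilon_h \setminus \mathcal{A}_h$ is excluded from $\mathcal{F}$, after which the uniform $\delta$-drop at every jump cleanly precludes a trajectory from accumulating on the undesired critical set.
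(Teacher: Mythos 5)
Your proposal is correct and follows essentially the same route as the paper's proof: the Lyapunov function $U_T$, the flow decrease $\dot U_T = -2k_R\|\bar H\,\Psi_\nabla^{\bar R}\|^2 - k_\xi\|\Psi_\nabla^{\xi}\|^2 \le 0$ via \eqref{relative_correcting_term} and Lemma \ref{H_rank}, the uniform $\delta$-drop at each jump yielding a finite jump count, and the hybrid invariance principle combined with Condition \ref{hybrid_ass} to show $\mathcal{F}\cap\Upsilon_h=\mathcal{A}_h$. The only additions beyond the paper's argument are your explicit remarks on precompactness and Zeno exclusion, which are consistent with, and implicit in, the paper's reasoning.
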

\begin{proof}
    Consider the following Lyapunov function candidate:
\begin{align}\label{l_fct}
    U_R(x_h) = \sum_{k=1}^{M}U(\bar R_k,\xi_k).
\end{align}
We take the potential function $U_R$ as a Lyapunov function candidate to study the stability property of the set $\mathcal{A}_h$ for the hybrid closed-loop system (\ref{hybrid_sys}). The time-derivative of (\ref{l_fct}), along the trajectories generated by the flows of the hybrid closed-loop system (\ref{hybrid_sys}), is given by
\begin{align}\label{U_grd}
        \dot U_R(x_h) =& \sum_{k=1}^{M}\langle \nabla_{\bar R_k}U_R, k_R \bar R_k [\bar \sigma_k]^\times\rangle_{\bar R_k}+\sum_{k=1}^{M}\langle \langle \nabla_{\xi_k}U_R, \dot \xi_k \rangle \rangle \nonumber\\
        =& \sum_{k=1}^{M}\langle \langle \bar R_k^T \nabla_{\bar R_k}U_R, k_R[\bar \sigma_k]^\times\rangle \rangle+\sum_{k=1}^{M}\langle \langle \nabla_{\xi_k}U_R, \dot \xi_k \rangle \rangle \nonumber\\
        =& 2 k_R\sum_{k=1}^{M} \bar \sigma_k^T \psi\left(\bar R^T_k\nabla_{\bar R_k}U_R\right)+\sum_{k=1}^{M}\dot \xi_k\nabla_{\xi_k}U_R\\
        =& 2 k_R \bar \sigma^T \Psi_\nabla^{\bar R}+\dot \xi^T \Psi_\nabla^\xi,\nonumber
    \end{align}
    where $\Psi_\nabla^\xi :=\left[\nabla_{\xi_1}U_R,\nabla_{\xi_2}U_R, \hdots, \nabla_{\xi_M}U_R\right]^T \in \mathbb{R}^M$. To derive the above equations, the following facts have been used: $\langle \eta_1, \eta_2 \rangle_R=\langle\langle R^T\eta_1, R^T\eta_2 \rangle\rangle$, $\text{tr}\left(B[x]^\times\right)=\text{tr}\left(\mathbb{P}_a(B)[x]^\times\right)$ and $\text{tr}\left([x]^\times [y]^\times\right)=-2x^Ty$, $\forall x, y\in \mathbb{R}^3$, $\forall B \in \mathbb{R}^{3\times3}$, $\forall \eta_1, \eta_2 \in \mathfrak{so}(3)$ and $\forall R \in SO(3)$. It follows from (\ref{H_observer_dynamics_1}) and (\ref{relative_correcting_term}) that 
    \begin{equation}\label{u_dot}
        \dot U_R(x_h) = -2k_R||\textit{\textbf{H}} \Psi_\nabla^{\bar R}||^2-k_\xi || \Psi_\nabla^\xi||^2 \leq 0.
    \end{equation}
    Consequently, $U_R(x_h)$ is non-increasing along the flows of (\ref{hybrid_sys}). Moreover, in view of (\ref{hybrid_sys}) and (\ref{potential_fct}), one has
\begin{align}\label{u_+}
    U_R(x_h)-U_R(x_h^+)=&\sum_{k=1}^{M}\left(U(\bar{R}_k, \xi_k)-U(\bar{R}_k^+, \xi_k^+)\right)\nonumber\\
    \geq&\delta.
\end{align}
Thus, $U_R(x_h)$ is strictly decreasing over the jumps of \eqref{hybrid_sys}. It follows from \eqref{u_dot}-\eqref{u_+} and the result presented in \cite[Theorem 23]{Goebel_ieee_magazine} that the set $\mathcal{A}_h$ is stable. Consequently, every maximal solution of the hybrid closed-loop system (\ref{hybrid_sys}) is bounded. In addition, from \eqref{u_dot} and \eqref{u_+}, one can verify that $U_R(x_h(t,j))\leq U_R(x_h(t_j,j))$ and $U_R(x_h(t_j,j))\leq U_R(x_h(t_j,j-1))-\delta$, $\forall (t,j), (t_j,j), (t_j,j-1) \in \text{dom} \hspace{0.1cm} x_h$, with $(t,j)\geq (t_j,j) \geq (t_j,j-1)$. Thus, one has $0\leq U_R(x_h(t,j))\leq U_R(x_h(0,0))-j\delta$, $\forall (t,j)\in \text{dom} \hspace{0.1cm} x_h$, which leads to $j\leq \lceil \frac{U_R(x_h(0,0))}{\delta}\rceil$, where $\lceil . \rceil$ denotes the ceiling function. This shows that the number of jumps is finite and depends on the initial conditions.\\
Now, we will show the global attractivity of $\mathcal{A}_h$ using the invariance principle for hybrid systems \cite[Section 8.2]{goebel2012hybrid}. Consider the following functions:
\begin{align}
     u_{\mathcal{F}}(x_h) &:=\begin{cases}
                    -2k_R||\textit{\textbf{H}} \Psi_\nabla^{\bar R}||^2-k_\xi || \Psi_\nabla^\xi||^2~~~~~ \text{if}~x_h \in \mathcal{F},\\
                     -\infty~~~~~~~~~~~~ \text{otherwise},
                \end{cases}\label{uf}\\
     u_{\mathcal{J}}(x_h) &:=\begin{cases}
                    &-\delta~~~~~~~~~~~~ \text{if}~x_h \in \mathcal{J},\\
                    & -\infty~~~~~~~~~~ \text{otherwise},
                \end{cases}\label{uj}
\end{align}
In view of \eqref{u_dot}-\eqref{uj}, one can notice that the growth of $U_R$ is upper bounded during the flows by $u_{\mathcal{F}}(x_h) \leq 0$ and during the jumps by $u_{\mathcal{J}}(x_h) \leq 0$ for every $x_h \in \mathcal{S}_h$. 
It follows from \cite[Corollary 8.4]{goebel2012hybrid} that every maximal solution of the hybrid system \eqref{hybrid_sys} converges to the following largest weakly\footnote{The reader is referred to \cite{goebel2012hybrid} for the definition of \textit{weakly invariant} sets in the hybrid systems context.} invariant subset: $$U_R^{-1}(r)\cap \mathcal{S}_h \cap \left[\overline{u_{\mathcal{F}}^{-1}(0)} \cup \left(u_{\mathcal{J}}^{-1}(0)\cap G \left(u_{\mathcal{J}}^{-1}(0)\right)\right)\right],$$ for some $r \in \mathbb{R}$, where $\overline{u_{\mathcal{F}}^{-1}(0)}$ denotes the closure of the set $u_{\mathcal{F}}^{-1}(0)$. Moreover, one can verify that 
\begin{align}
    u_{\mathcal{F}}^{-1}(0)&=\{x_h \in \mathcal{F}:~\textit{\textbf{H}} \Psi_\nabla^{\bar R}=0,~ \Psi_\nabla^\xi=0\} \nonumber\\
    u_{\mathcal{J}}^{-1}(0)&= \emptyset. \nonumber
\end{align}
Furthermore, according to Lemma \ref{lem_H}, one has
\begin{align}
    u_{\mathcal{F}}^{-1}(0)&=\{x_h \in \mathcal{F}:~\Psi_\nabla^{\bar R}=0,~ \Psi_\nabla^\xi=0\} \nonumber\\
    &= \mathcal{F} \cap \Upsilon_h, \nonumber
\end{align}
where the set $\Upsilon_h$ is defined in \eqref{equi_set_}. Given $x_h \in \mathcal{A}_h$, one obtains, for all $k \in \mathcal{M}$, $U(\bar{R}_k, \xi_k)-\underset{\bar{\xi}_k\in \Xi}{\text{min}} U(\bar{R}_k,\bar{\xi}_k)=-\underset{\bar{\xi}_k\in \Xi}{\text{min}} U(\bar{R}_k,\bar{\xi}_k)\leq 0$. Therefore, from (\ref{network_f_j_set}), and according to Condition \ref{hybrid_ass}, one can verify that $\mathcal{A}_h \subset \mathcal{F}\cap \Upsilon_h$ and $\mathcal{F}\cap (\Upsilon_h \setminus \mathcal{A}_h)=\emptyset$. In addition, applying some set-theoretic arguments, one has $\mathcal{F} \cap \Upsilon_h \subset (\mathcal{F} \cap (\Upsilon_h\setminus \mathcal{A}_h)) \cup (\mathcal{F} \cap \mathcal{A}_h)= \emptyset \cup \mathcal{A}_h$. It follows from $\mathcal{A}_h \subset \mathcal{F} \cap \Upsilon_h$ and $\mathcal{F} \cap \Upsilon_h \subset \mathcal{A}_h$ that $\mathcal{F} \cap \Upsilon_h = \mathcal{A}_h$. Hence, $u_{\mathcal{F}}^{-1}(0)=\mathcal{A}_h$. Consequently, every maximal solution of the hybrid system \eqref{hybrid_sys} converges to the largest weakly invariant subset $U_R^{-1}(0)\cap \mathcal{A}_h = \mathcal{A}_h$. Since every maximal solution of the hybrid closed-loop system (\ref{hybrid_sys}) is bounded, $G(x_h) \in \mathcal{F} \cup \mathcal{J}$ for every $x_h \in \mathcal{J}$, and $F(x_h)\subset T_\mathcal{F}(x_h)$, for every $x_h \in \mathcal{F}\setminus \mathcal{J}$, where $T_\mathcal{F}(x_h)$ denotes the tangent cone to $\mathcal{F}$ at the point $x_h$, according to \cite[Proposition 6.10]{goebel2012hybrid}, one can conclude that every maximal solution of the hybrid closed-loop system (\ref{hybrid_sys}) is complete. This, together with Lemma \ref{lem_hbc}, allows us to conclude that the set $\mathcal{A}_h$ is globally asymptotically stable for the hybrid closed-loop system (\ref{hybrid_sys}). This completes the proof.
\end{proof}
\begin{rmk}
    Note that the design of our proposed hybrid distributed attitude observer \eqref{TH_observer_dynamics_1}-\eqref{TH_observer_dynamics_2} is based on a generic potential function $U_R$, defined on $\mathcal{S}_h$, with respect to $\mathcal{A}_h$. It is also important to note that the flow set $\mathcal{F}$ and the jump set $\mathcal{J}$, given in \eqref{network_f_j_set}, depend on the parameters $\delta$ and $\Xi$. These parameters, along with the potential function $U_R$, must be carefully designed to satisfy Condition \ref{hybrid_ass}.
\end{rmk}

\indent In the next section, we will provide the explicit structure of our proposed hybrid estimation scheme by first presenting the potential function and specifying the set of parameters $\mathcal{P}$ in which Condition \ref{hybrid_ass} is satisfied. We then provide the explicit form of our proposed hybrid distributed attitude observer in terms of relative attitude measurements and attitude estimates.\\

\subsubsection{Explicit Hybrid Distributed Attitude Observer Design Using Relative Attitude Measurements}
Let us begin this section by introducing the potential function $U_R$ and some useful related properties. Consider the potential function $U_R(x_h)$, given in \eqref{potential_fct}, where $U(\bar R_k, \xi_k)$ is defined as follows:
\begin{equation}
    U(\bar R_k,\xi_k):=\text{tr}\Big(A\left(I_3-\bar R_k\mathcal{R}_\alpha(\xi_k,u)\right)\Big)+\frac{\gamma}{2}\xi_k^2,
\end{equation}
with $A \in \mathbb{R}^{3\times 3}$ satisfies Assumption \ref{A_ass}, $u\in \mathbb{S}^2$ is a constant unit vector and $\gamma$ is a positive scalar. Note that $U_R$ is an extension of the potential function $U(R_e,\theta)$ proposed in \cite{miaomiao_TAC2022} for a single agent attitude control design. In the following proposition, we will derive the gradient of $U_R$ with respect to $\bar R_k$ and $\xi_k$ and give the set of all its critical points.
\begin{pro}\label{chr_pf}
    Consider the potential function $U_R(x_h)$, where Assumption \ref{A_ass} is satisfied. Then, the following statements hold:
    \begin{itemize}
        \item $ \psi \left(\bar R_k^T \nabla_{\bar R_k}U_R\right)= \mathcal{R}_\alpha(\xi_k,u) \psi(A\bar R_k\mathcal{R}_\alpha(\xi_k,u))$ for all $k \in \mathcal{M}$.
        \item $\nabla_{\xi_k}U_R=\gamma \xi_k +2 u^T \psi(A\bar R_k\mathcal{R}_\alpha(\xi_k,u))$ for all $k \in \mathcal{M}$.
        \item $\Upsilon_h= \{x_h \in \mathcal{S}_h: \forall k\in \mathcal{M}, ~\mathbb{P}_a(A \bar R_k)=0, \xi_k=0\}$.
        \item $\mathcal{A}_h \subset \Upsilon_h$.
    \end{itemize}
\end{pro}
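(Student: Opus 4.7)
The plan is to prove the three items in order, exploiting the fact that $U_T(x_h)=\sum_{k\in\mathcal{M}}U(\bar R_k,\xi_k)$ decouples across $k$, so the gradients in (1) and (2) reduce to computations on a single summand. Item (3) then combines these formulas with the classical characterization of the critical points of the trace-type function on $SO(3)$ already implicit in the proof of Theorem~\ref{theorem1}.

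For item (1), I would fix $\xi_k$ and differentiate $U(\bar R_k,\xi_k)$ along the variation $\bar R_k(t)=\bar R_k\exp(t[\eta]^\times)$, $\eta\in\mathbb{R}^3$. Cyclic permutation under the trace turns the derivative $-\text{tr}(A\bar R_k[\eta]^\times\mathcal{R}_\alpha(\xi_k,u))$ into $-\text{tr}(\mathcal{R}_\alpha(\xi_k,u)A\bar R_k[\eta]^\times)$, and the identities $\text{tr}(B[\eta]^\times)=\text{tr}(\mathbb{P}_a(B)[\eta]^\times)$ and $\text{tr}([x]^\times[\eta]^\times)=-2x^T\eta$ (already used in the proof of Theorem~\ref{theorem1}) produce $dU[\bar R_k[\eta]^\times]=2\psi(\mathcal{R}_\alpha(\xi_k,u)A\bar R_k)^T\eta$. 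Matching against the natural bi-invariant metric on $SO(3)$ (as used in the proof of Theorem~\ref{theorem2}) yields $\psi(\bar R_k^T\nabla_{\bar R_k}U_T)=\psi(\mathcal{R}_\alpha(\xi_k,u)A\bar R_k)$. The asserted form follows from the conjugation identity $\psi(RMR^T)=R\psi(M)$---itself a consequence of $\mathbb{P}_a(RMR^T)=R\mathbb{P}_a(M)R^T$ combined with $R[z]^\times R^T=[Rz]^\times$ for $R\in SO(3)$---applied with $R=\mathcal{R}_\alpha(\xi_k,u)$ and $M=A\bar R_k\mathcal{R}_\alpha(\xi_k,u)$, so that the trailing $\mathcal{R}_\alpha\mathcal{R}_\alpha^T$ collapses to $I_3$. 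Item (2) is a one-variable differentiation: using $\frac{d}{d\xi_k}\mathcal{R}_\alpha(\xi_k,u)=\mathcal{R}_\alpha(\xi_k,u)[u]^\times$ (since $\mathcal{R}_\alpha(\xi_k,u)=\exp(\xi_k[u]^\times)$) and the same $\text{tr}(\cdot[u]^\times)$ identity, the formula follows immediately.

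For item (3), both gradients must vanish. Item (1) gives $\nabla_{\bar R_k}U_T=0\Leftrightarrow\psi(A\bar R_k\mathcal{R}_\alpha(\xi_k,u))=0$, since $\mathcal{R}_\alpha(\xi_k,u)$ is invertible; substituting this into item (2) forces $\gamma\xi_k=0$, hence $\xi_k=0$ and $\mathcal{R}_\alpha(\xi_k,u)=I_3$. The residual equation $\psi(A\bar R_k)=0$, equivalent to $A\bar R_k$ being symmetric, admits on $SO(3)$ exactly the solutions $\bar R_k\in\{I_3\}\cup\{\mathcal{R}_\alpha(\pi,v):v\in\mathcal{E}(A)\}$ when $A$ is positive definite with three distinct eigenvalues---a standard fact verified from $\mathcal{R}_\alpha(\pi,v)=-I_3+2vv^T$ together with $Av=\lambda v$. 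This yields the stated form of $\Upsilon_h$. The main technical step I expect to be item (1), where one must carefully track the Riemannian gradient convention on $SO(3)$ and rearrange $\psi$ via the conjugation identity to expose the left multiplication by $\mathcal{R}_\alpha(\xi_k,u)$ appearing in the claim; items (2) and (3) then follow by routine calculation combined with algebraic identities already deployed in the paper.
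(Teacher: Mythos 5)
Your proposal is correct and follows essentially the same route as the paper: both extract the gradients by differentiating the trace term with the identities $\text{tr}(B[\eta]^\times)=\text{tr}(\mathbb{P}_a(B)[\eta]^\times)$ and $\text{tr}([x]^\times[\eta]^\times)=-2x^T\eta$ and matching against the Riemannian metric (the paper does this along the closed-loop vector field and compares with its earlier expression for $\dot U_T$, while you use an arbitrary variation, which is the same computation). The only minor difference is in item (3), where the paper simply cites Mayhew's lemma for the critical points of $\text{tr}(A(I_3-R))$ whereas you sketch the characterization of $\psi(A\bar R_k)=0$ directly; both are fine.
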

\begin{proof}
    The time derivative of $U_R$, along the trajectories of the hybrid closed-loop system (\ref{hybrid_sys}), is given by
    
    \begin{align}
    \dot{U}_R(x_h)=&-\sum_{k=1}^{M}\text{tr}\big(A\bar{R}_k\mathcal{R}_\alpha(\xi_k,u)[k_R\mathcal{R}_\alpha(\xi_k,u)^T\bar{\sigma}_k\nonumber\\&+\dot{\xi}_k u]^\times\big)+\gamma \sum_{k=1}^{M}\dot{\xi}_k \xi_k.
\end{align}
Using the following identities: $\text{tr}\left(B[x]^\times\right)=\text{tr}\left(\mathbb{P}_a(B)[x]^\times\right)$ and $\text{tr}\left([x]^\times [y]^\times\right)=-2x^Ty$, $\forall x, y\in \mathbb{R}^3$ and $\forall B \in \mathbb{R}^{3\times3}$, one obtains
\begin{align}\label{U_der_1}
    \dot{U}_R(x_h) =& 2k_R\sum_{k=1}^{M}\bar{\sigma}_k^T \mathcal{R}_\alpha(\xi_k,u)\psi\big(A\bar{R}_k\mathcal{R}_\alpha(\xi_k,u)\big)\nonumber\\
    &+\sum_{k=1}^{M}\dot{\xi}_k\Big(\gamma \xi_k +2 u^T \psi\big(A\bar{R}_k\mathcal{R}_\alpha(\xi_k,u)\big)\Big).
\end{align}
It follows from (\ref{U_grd}) and (\ref{U_der_1}) that $ \psi \left(\bar R_k^T \nabla_{\bar R_k}U_R\right)= \mathcal{R}_\alpha(\xi_k,u) \psi(A \bar R_k\mathcal{R}_\alpha(\xi_k,u))$ and $\nabla_{\xi_k}U_R=\gamma \xi_k +2 u^T \psi(A\bar R_k\mathcal{R}_\alpha(\xi_k,u))$ for all $k \in \mathcal{M}$. Considering the last two expressions ($\psi \left(\bar R_k^T \nabla_{\bar R_k}U_R\right)$ and $\nabla_{\xi_k}U_R$) and the definition of the set of all critical points of $U_R$, given in \eqref{equi_set_}, one can conclude that $\Upsilon_h$ is the set of all critical points of $U_R$. Moreover, one can conclude that $\mathcal{A}_h \subset \Upsilon_h$. This completes the proof.
\end{proof}
\begin{rmk}\label{rmk_explicit}
    Note that the expressions for the gradients of $U_R$ with respect to $\bar R_k$ and $\xi_k$, given in Proposition \ref{chr_pf}, are derived in terms of relative attitude errors $\bar R_k$. These relative attitude errors can be constructed from the relative orientation measurements \eqref{measurement_model_R} and the estimated orientations as follows:  $\bar R_k = \hat R_j R_{ij}^T \hat R_i^T$, for every $(i,j) \in \mathcal E$ such that $\{k\}=\mathcal{M}_i^+\cap\mathcal{M}^-_j$. 
\end{rmk}

Now, consider the set of parameters $\mathcal{P}:=\{\Xi, A, u, \gamma, \delta\}$. The next proposition gives the possible choices of parameters in the set $\mathcal{P}$ in which Condition \ref{hybrid_ass} is satisfied. 
\begin{pro}\label{pro_set}
Consider the potential function $U_R$. Then, Condition \ref{hybrid_ass} holds under the following set of parameters $\mathcal{P}$:
\begin{equation} {\mathcal{P}}: \begin{cases} {\Xi = \left\{ {\left| {{\xi _i}} \right| \in (0,\pi ],i = 1, \cdots ,m} \right\}} \\ {A:0 < \lambda_1 \leq \lambda_2 < \lambda _3} \\ {u = {\alpha _1}q_1 + {\alpha _2}q_2 + {\alpha _3}q_2} \\ {\gamma < \frac{{4{\Delta ^{\ast}}}}{{{\pi ^2}}}} \\ {0<\delta < \left( {\frac{{4{\Delta ^{\ast}}}}{{{\pi ^2}}} - \gamma } \right)\frac{{\xi_M^2}}{2},{\xi_M}: = \mathop {\max }\limits_{\xi \in \Xi } \left| \xi \right|} \end{cases}\end{equation}
where $\alpha_1^2+\alpha_2^2+\alpha_3^2=1$ and $\Delta^*>0$ are given as follows:
\begin{itemize}
    \item If $\lambda_1=\lambda_2$, $\alpha_3^2=1-\frac{\lambda_2}{\lambda_3}$ and $\Delta^*=\lambda_1(1-\frac{\lambda_2}{\lambda_3})$.
    \item If $\lambda_2\geq \frac{\lambda_1 \lambda_3}{\lambda_3-\lambda_1}$, $\alpha_i^2=\frac{\lambda^A_i}{\lambda_2+\lambda_3}$,$i\in \{2,3\}$ and $\Delta^*=\lambda_1$.
    \item If $\lambda_1<\lambda_2<\frac{\lambda_1\lambda_3}{\lambda_3-\lambda_1}$, $\alpha_i^2=1-\frac{4\prod_{l\neq i}\lambda_l}{\sum_{l=1}^{3}\sum_{k\neq l}^{3}\lambda_l\lambda_k}, \forall i\in \{1, 2, 3\}$, and $\Delta^*=\frac{4\prod_{l}\lambda_l}{\sum_{l=1}^{3}\sum_{k\neq l}^{3}\lambda_l\lambda_k}$.
\end{itemize}
where $(\lambda_i, q_i)$ denoting the $i$-th pair of eigenvalue-eigenvector of matrix A.
\end{pro}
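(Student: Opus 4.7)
The plan is to reduce the multi-agent condition to an edge-wise single-agent condition and then to inherit the parameter analysis from \cite{miaomiao_TAC2022}. Since $U_{T}(x_{h})=\sum_{k=1}^{M}U(\bar{R}_{k},\xi_{k})$ is separable over edges and the sets $\mathcal{A}_{h}$ and $\Upsilon_{h}$ also decompose factor-wise, it suffices to establish, for every fixed edge $k\in\mathcal{M}$ and for every undesired critical pair $(\bar{R}_{k},\xi_{k})$, the scalar inequality
\begin{equation}
U(\bar{R}_{k},\xi_{k}) - \min_{\bar{\xi}_{k}\in\Xi} U(\bar{R}_{k},\bar{\xi}_{k}) > \delta. \nonumber
\end{equation}
By Proposition \ref{chr_pf}, the only undesired critical pairs take the form $\bar{R}_{k}=\mathcal{R}_{\alpha}(\pi,v)$ with $v\in\mathcal{E}(A)$ and $\xi_{k}=0$, so at such a pair $U(\bar{R}_{k},\xi_{k}) = \operatorname{tr}\bigl(A(I_{3}-\mathcal{R}_{\alpha}(\pi,v))\bigr)=2(\operatorname{tr}A - v^{\top}Av)$.

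Next, I would evaluate $U(\mathcal{R}_{\alpha}(\pi,v),\bar{\xi})$ at an arbitrary $\bar{\xi}\in\Xi$ by expanding the composition $\mathcal{R}_{\alpha}(\pi,v)\mathcal{R}_{\alpha}(\bar{\xi},u)$ with Rodrigues' formula. The key step is to produce a quadratic lower bound
\begin{equation}
U(\mathcal{R}_{\alpha}(\pi,v),0) - U(\mathcal{R}_{\alpha}(\pi,v),\bar{\xi}) \;\geq\; \tfrac{1}{2}\!\left(\tfrac{4\Delta^{\ast}}{\pi^{2}}-\gamma\right)\bar{\xi}^{2},\nonumber
\end{equation}
valid for every $v\in\mathcal{E}(A)$ and every $\bar{\xi}\in[-\pi,\pi]$, where $\Delta^{\ast}>0$ is the constant defined in the proposition. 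This inequality is precisely the single-agent gap estimate established in \cite[Lemma/Proposition]{miaomiao_TAC2022} for the function $U(R_{e},\theta)$, and it is obtained by partitioning the analysis into the three eigenvalue cases listed in the statement and, in each case, choosing the coefficients $\alpha_{1},\alpha_{2},\alpha_{3}$ of $u$ in the eigenbasis of $A$ so that the rotation of $\mathcal{R}_{\alpha}(\pi,v)$ by $\bar{\xi}$ about $u$ strictly decreases the trace term by at least $\frac{2\Delta^{\ast}}{\pi^{2}}\bar{\xi}^{2}$, uniformly in the eigenvector $v\in\mathcal{E}(A)$. I would therefore cite (and, if needed, restate) this lemma rather than redo the trigonometric manipulations.

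Once the quadratic bound is in hand, evaluating it at $\bar{\xi}=\xi_{M}\in\Xi$ (which exists by the definition of $\Xi$ and of $\xi_{M}:=\max_{\xi\in\Xi}|\xi|$) gives
\begin{equation}
U(\bar{R}_{k},\xi_{k}) - \min_{\bar{\xi}_{k}\in\Xi} U(\bar{R}_{k},\bar{\xi}_{k}) \;\geq\; \tfrac{1}{2}\!\left(\tfrac{4\Delta^{\ast}}{\pi^{2}}-\gamma\right)\xi_{M}^{2},\nonumber
\end{equation}
and the condition $\gamma<\frac{4\Delta^{\ast}}{\pi^{2}}$ makes the right-hand side strictly positive while the final constraint on $\delta$ makes it strictly larger than $\delta$. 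Aggregating over $k\in\mathcal{M}$ yields Condition \ref{hybrid_ass} for $U_{T}$.

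The main obstacle is the quadratic gap estimate, which is geometric rather than routine: one must argue that for \emph{every} unit eigenvector $v$ of $A$ simultaneously, the composed rotation $\mathcal{R}_{\alpha}(\pi,v)\mathcal{R}_{\alpha}(\bar{\xi},u)$ strictly increases the inner product $\langle A,\cdot\rangle$ at a rate at least $\Delta^{\ast}$. The three eigenvalue cases in the statement correspond exactly to the different maximin choices of the direction $u$ required to handle the worst-case $v$, and the formulas for $\alpha_{i}^{2}$ and $\Delta^{\ast}$ are obtained by optimizing this maximin problem. Apart from this core estimate, everything else in the proof is algebraic bookkeeping: separability across edges, the Rodrigues expansion, and verifying that the constraints on $\gamma$ and $\delta$ preserve a strict positive gap $\delta$.
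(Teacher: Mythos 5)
Your proposal is correct and follows essentially the same route as the paper, which simply invokes the arguments of \cite[Proposition 2]{miaomiao_TAC2022}: you reduce Condition \ref{hybrid_ass} edge-by-edge using the separability of $U_T$ and the characterization of $\Upsilon_h\setminus\mathcal{A}_h$ from Proposition \ref{chr_pf}, and then inherit the single-agent quadratic gap estimate (and the three eigenvalue cases determining $\alpha_i$ and $\Delta^\ast$) from the cited work. Your write-up is in fact more explicit than the paper's one-line proof about why the multi-agent statement reduces to the single-agent one.
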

\begin{proof}
Following the same arguments given in the proof of \cite[Proposition 2]{miaomiao_TAC2022}, one can prove Proposition \ref{pro_set}.
\end{proof}

Let us conclude this section by giving the explicit form of our proposed hybrid distributed attitude estimation scheme. From Proposition \ref{chr_pf}, Proposition \ref{pro_set}, and the fact in Remark \ref{rmk_explicit}, one can explicitly derive the proposed hybrid distributed attitude observer, given in \eqref{TH_observer_dynamics_1}-\eqref{TH_observer_dynamics_2}, in terms of relative attitude measurements and attitude estimates as follows: 
\begin{equation}
    \underbrace{
                  \begin{aligned}\label{R_obs_f}
                    \dot{\hat{R}}_i&=\hat{R}_i[\omega_i-k_R\hat{R}_i^T \sigma_i]^\times \\
                    \dot{\xi}_k&=-k_\xi\left(\gamma \xi_k+2u^T\psi\big(A\bar{R}_k\mathcal{R}_a(\xi_k,u)\big)\right)\\
                    \sigma_i&=-\Biggl(\sum_{j\in\mathcal{I}_i}\mathcal{R}_a(\xi_p,u)\psi(A\hat R_j R_{ij}^T \hat R_i^T \mathcal{R}_a(\xi_p,u))\\
                    &~~~~~~~~~~~~~~~~~~~~~~~~~+\sum_{j\in\mathcal{O}_i}\psi(A\mathcal{R}_a(\xi_l,u)^T\hat R_j R_{ij}^T \hat R_i^T )\Biggl)
                 \end{aligned}
                }_{x_h \in \mathcal{F}_i}
\end{equation}
\begin{equation}
                 \underbrace{
                \begin{aligned}\label{R_obs_j}
                    \hat R_i^+ &= \hat R_i\\
                    \xi_k^+&\in 
                    \begin{cases}
                    \xi_k \hspace{1.3cm}\text{if} \hspace{0.5cm} U(\bar{R}_k, \xi_k)-U(\bar{R}_k, \xi^*_k) \leq \delta\\
                    \xi^*_k \hspace{1.3cm}\text{if} \hspace{0.5cm} U(\bar{R}_k, \xi_k)-U(\bar{R}_k, \xi^*_k) \geq \delta
                    \end{cases}
                \end{aligned}
                }_{x_h \in \mathcal{J}_i}
\end{equation}
where $i\in\mathcal{V}$, $ k \in \mathcal{M}_i^+$, $\{p\}=\mathcal{M}_i^+\cap \mathcal{M}_j^- \in \mathcal{M}$ and $\{l\}=\mathcal{M}_i^-\cap \mathcal{M}_j^+ \in \mathcal{M}$. Note that $\mathcal{N}_i=\mathcal{I}_i\cup\mathcal{O}_i$, with $\mathcal{I}_i:=\{j\in \mathcal{N}_i: j \hspace{0.1cm} \text{is the tail of the edge} \hspace{0.1cm} (i,j) \in \mathcal{E}\}$ and $\mathcal{O}_i:=\{j\in \mathcal{N}_i: j \hspace{0.1cm} \text{is the head of the edge} \hspace{0.1cm} (i,j) \in \mathcal{E}\}$.
\begin{rmk}
For the implementation of our proposed hybrid distributed attitude observer \eqref{R_obs_f}-\eqref{R_obs_j}, we assume that the dynamics of $\xi_k$ are implemented at agent $i$, and agent $j$ receives the information about $\xi_k$ from agent $i$, according to Assumption \ref{measurement_ass}, for every $(i,j) \in \mathcal E$ such that $\{k\}=\mathcal{M}_i^+\cap\mathcal{M}^-_j$. 
\end{rmk}

\section{Application to distributed pose estimation}\label{s5}
As an application, we will use the proposed hybrid distributed attitude observer \eqref{R_obs_f}-\eqref{R_obs_j} to globally estimate the poses of $N$ agents, evolving in a $3$-dimensional space, up to a common constant translation and orientation, based on individual body-frame linear velocity measurements and local relative time-varying bearing measurements. The idea is to map the local relative (time-varying) bearing measurements, via the estimated attitudes obtained from \eqref{R_obs_f}-\eqref{R_obs_j}, into the inertial frame so they can be used for the hybrid distributed position estimation scheme.

Consider a network of $N$ agents. Let the following kinematic equations describe the agents' motion:
\begin{eqnarray}
    \dot{R}_i &=& R_i[\omega_i]^{\times}\label{pb1_rotation}\\
    \dot{p}_i &=& v_i, \label{p_dynamics}
\end{eqnarray}
where $p_i \in \mathbb{R}^3$ and $v_i \in \mathbb{R}^3$ are the position and velocity of agent $i$, respectively, expressed in the inertial frame and $i \in \mathcal{V}$. Let $p:= \left[p_1^T, p_2^T, \hdots, p_N^T\right]^T$. The graph $\mathcal{G}$ together with the stack position vector $p$ define the formation $\mathcal{G}(p(t))$.

The measurement of the local relative bearing between agent $i$ and agent $j$ is given by
\begin{equation}\label{b_measurement_model}
    b_{ij}^i(t):=R_i^T b_{ij}(t),
\end{equation}
where $b_{ij}(t):=\frac{p_j(t)-p_i(t)}{||p_j(t)-p_i(t)||}$ and $b_{ij}^i(t)$ are the relative bearing measurements between agent $i$ and agent $j$ expressed in the inertial frame and the body-attached frame of agent $i$, respectively. Recall that the agents orientations are time-varying, but for the sake of simplicity, the time argument $t$ is omitted in the expression of $R_i$ in \eqref{b_measurement_model}. Before presenting our proposed distributed position estimation law, we will introduce some important definitions and assumptions.
\begin{Definition}\cite{Tang_CDC2020}
    Consider the formation $\mathcal{G}(p(t))$ with an arbitrary orientation of the graph $\mathcal{G}$. Define the matrix $\textit{\textbf{L}}$ and bearing Laplacian matrix $\textit{\textbf{L}}_B$ as $\textit{\textbf{L}} := L\otimes I_3$ and $\textit{\textbf{L}}_B(t) :=\textit{\textbf{H}} \text{diag}(P_{b_k(t)})\textit{\textbf{H}}^T$, respectively, where $b_k$ is the bearing vector corresponding to the edge $k$, and $\textit{\textbf{H}}=H \otimes I_3$ \big($H$ and $L$ are the incidence and the Laplacian matrices, respectively, corresponding to the graph $\mathcal{G}$\big). The bearing Laplacian matrix is called persistently exciting (PE) if there exists $T>0$ and $\mu >0$ such that:
    \begin{align}
     \int_t^{t+T} \textit{\textbf{L}}_B(\tau) d\tau \geq \mu \textit{\textbf{L}},
   \end{align}
    for all $t$.
\end{Definition}

\begin{Definition}\cite{Tang_CDC2020}
    The formation $\mathcal{G}(p(t))$ is bearing persistently exciting if the graph $\mathcal{G}$ is connected and its bearing Laplacian matrix is PE.
\end{Definition}
Now, consider the following assumptions:

\begin{assumption}\label{measurement_ass_2}
  Each agent $i$ in the formation measures the local relative bearings $b_{ij}^i(t)$ with respect to its neighboring agents $j\in \mathcal{N}_i$.
\end{assumption} 

\begin{assumption}\label{pe_ass}
  The formation $\mathcal{G}(p(t))$ is bearing persistently exciting.
\end{assumption}

\begin{assumption}\label{b_ass}
   As the formation evolves over time, there is no collision between agents.
\end{assumption}

\begin{assumption}\label{measurement_avialble}
  The body-frame linear velocity of each agent is bounded and available.
\end{assumption}

With all these ingredients, we propose the following distributed hybrid pose estimation law:
\begin{align}
             &\underbrace{
                  \begin{aligned}\label{p_obs_f}
                    \dot{\hat{R}}_i&=\hat{R}_i[\omega_i-k_R\hat{R}_i^T \sigma_i]^\times \\
                    \dot{\hat{p}}_i &= \hat{R}_i v_i^i-k_p \sum_{j\in \mathcal{N}_i} \hat{R}_i \left(P_{b^i_{ij}(t)} \hat{R}_i^T\hat{p}_i-R_{ij}P_{b^j_{ji}(t)} \hat{R}_j^T\hat{p}_j\right)\\
                    &~~~~~~~~~~~~~~~~~~~~~~~~~~~~~-k_R[\sigma_i]^\times\hat{p}_i\\
                    \dot{\xi}_k&=-k_\xi\left(\gamma \xi_k+2u^T\psi\big(A\bar{R}_k\mathcal{R}_a(\xi_k,u)\big)\right)
                 \end{aligned}
                }_{x_h \in \mathcal{F}_i}
                \\
             &\underbrace{
                \begin{aligned}\label{p_obs_j}
                    \hat R_i^+ &= \hat R_i\\
                    \hat{p}_i^+ &= \hat{p}_i\\
                    \xi_k^+&\in 
                    \begin{cases}
                    \xi_k \hspace{1.3cm}\text{if} \hspace{0.5cm} U(\bar{R}_k, \xi_k)-U(\bar{R}_k, \xi^*_k) \leq \delta\\
                    \xi^*_k \hspace{1.3cm}\text{if} \hspace{0.5cm} U(\bar{R}_k, \xi_k)-U(\bar{R}_k, \xi^*_k) \geq \delta
                    \end{cases}
                \end{aligned}
                }_{x_h \in \mathcal{J}_i}
\end{align}
where $i \in \mathcal{V}$, $k_p>0$, $\hat{p}_i\in \mathbb{R}^3$ is the estimate of $p_i$ and $v_i^i$ is the body-frame linear velocity of agent $i$. Note that the expression of $\sigma_i$ is given in \eqref{R_obs_f}. Now, let us state the main result of this section. 
\begin{thm}
    Suppose Assumptions \ref{hybrid_ass}-\ref{measurement_avialble} and Condition \ref{hybrid_ass} hold. Then, with $k_R, k_\xi, k_p >0$, the distributed hybrid observer \eqref{p_obs_f}-\eqref{p_obs_j} estimates the agents' poses governed by the kinematics \eqref{pb1_rotation}-\eqref{p_dynamics} globally asymptotically up to a common constant translation and orientation. 
\end{thm}

\begin{proof}
     Since the local relative bearings with respect to neighboring agents and the body-frame linear velocity of each agent are available, according to Assumptions \ref{measurement_ass_2}, \ref{measurement_avialble}, and using the facts that $v_i^i = R_i^T v_i$ and $P_{b^i_{ij}(t)}=R_i^T P_{b_{ij}(t)}R_i$, it follows from \eqref{p_obs_f} that the position estimate $\hat p_i$ of each agent $i \in \mathcal{V}$, during the flows, can be rewritten as follows:
\begin{align}\label{Translation_observer_dynamics}
    \dot{\hat{p}}_i &= \tilde{R}^T_i v_i-k_p \tilde{R}^T_i \sum_{j\in \mathcal{N}_i} P_{b_{ij}(t)}  \left(\tilde{R}_i\hat{p}_i-\tilde{R}_j\hat{p}_j\right)-k_R[\sigma_i]^\times\hat{p}_i.
\end{align}
Note that, under Assumption \ref{b_ass}, one has $||p_i-p_j||\neq0$. Consequently, the bearing  $b_{ij}(t)$, for every $(i,j) \in \mathcal E$, is well defined for all $t\geq0$. Define the position estimation error as $\tilde{p}_i:=\tilde{R}_i\hat{p}_i - p_i$, for every $i\in\mathcal{V}$. In view of (\ref{p_dynamics}), (\ref{absolute_attitude_error}) and the above equation, the hybrid dynamics of $\tilde{p}_i$ are given by 
\begin{align}
             &\underbrace{
                  \begin{aligned}\label{p_obs_err_f}
                    \dot{\tilde{p}}_i =k_p\sum_{j\in \mathcal{N}_i} P_{b_{ij}(t)}(\tilde{p}_j-\tilde{p}_i)
                 \end{aligned}
                }_{x_h \in \mathcal{F}_i}
                \\
             &\underbrace{
                \tilde{p}_i^+ = \tilde{p}_i
                }_{x_h \in \mathcal{J}_i}\label{p_obs_err_j}
\end{align}
for every $i\in\mathcal{V}$. The equations \eqref{p_obs_err_f} and \eqref{p_obs_err_j} were obtained using the fact that $P_{b_{ij}}(p_j-p_i)=0$ and $\tilde p^+=\tilde R_i^+ \hat p_i^+-p_i=\tilde R_i \hat p_i-p_i=\tilde p_i$, respectively. Thanks to the definition of the position estimation error $\tilde p_i$, for every $i \in \mathcal{V}$, which allows deriving the above hybrid distributed estimation error dynamics that are independent of the attitude estimates provided by the hybrid attitude observer \eqref{R_obs_f}-\eqref{R_obs_j}. Now, letting $e_i := \tilde p_i-\frac{1}{N} \sum_{n=1}^N \tilde p_n(0)$, it follows from \eqref{p_obs_err_f}-\eqref{p_obs_err_j} that 
\begin{align}
             &\underbrace{
                  \begin{aligned}\label{e_obs_err_f}
                    \dot{e}_i =k_p\sum_{j\in \mathcal{N}_i} P_{b_{ij}(t)}(e_j-e_i)
                 \end{aligned}
                }_{x_h \in \mathcal{F}_i}
                \\
             &\underbrace{
                e_i^+ = e_i
                }_{x_h \in \mathcal{J}_i}\label{e_obs_err_j}
\end{align}
Define the new state space $\bar{\mathcal{S}}_h:= SO(3)^M \times \mathbb{R}^M \times \mathbb{R}^{3N} \times \mathbb{R}$ and the new state $\bar x_h := \left(\bar R_1, \hdots, \bar R_M, \xi_1, \hdots, \xi_M, e_1,\hdots, e_N, t\right)\in \bar{\mathcal{S}}_h$. From \eqref{hybrid_sys}, \eqref{R_obs_f}-\eqref{R_obs_j} and \eqref{e_obs_err_f}-\eqref{e_obs_err_j}, one obtains the following extended hybrid multi-agent closed-loop system:
\begin{equation}\label{T_hybrid_sys}
\bar{\mathcal{H}}:\begin{cases} {\dot{\bar{x}}_h = \bar F(\bar x_h)}&{\bar x_h \in {\bar{\mathcal{F}}}} \\ {{\bar x_h^ + } \in \bar G(\bar x_h)}&{\bar x_h \in {\bar{\mathcal{J}}}} \end{cases}
\end{equation}
where $\bar{\mathcal{F}}:=\{\bar x_h \in \bar{\mathcal{S}}_h:~x_h \in \mathcal{F}\}$, $\bar{\mathcal{J}}:=\{\bar x_h \in \bar{\mathcal{S}}_h:~x_h \in \mathcal{J}\}$, and the flow and jump maps are given by 
\begin{equation}
    \bar F(\bar x_h) := \left[ {\begin{array}{c} {k_R \bar R_1[\bar \sigma_1]^{\times}} \\ \vdots\\ {k_R \bar R_M[\bar \sigma_M]^{\times}}\\{-k_\xi\left(\gamma \xi_1+2u^T\psi\big(A\bar{R}_1\mathcal{R}_a(\xi_1,u)\big)\right)} \\ \vdots\\ {-k_\xi\left(\gamma \xi_M+2u^T\psi\big(A\bar{R}_M\mathcal{R}_a(\xi_M,u)\big)\right)}\\{k_p\sum_{j\in \mathcal{N}_1} P_{b_{1j}(t)}(e_j-e_1)} \\ \vdots\\ {k_p\sum_{j\in \mathcal{N}_N} P_{b_{Nj}(t)}(e_j-e_N)}\\{1} \end{array}}\right] \nonumber
\end{equation}
and {\small $$\bar G(\bar x_h) :=\left(\bar R_1, \hdots, \bar R_M, \{\xi_1, \xi_1^*\}, \hdots, \{\xi_M, \xi_M^*\}, e_1, \hdots, e_N, t\right).$$}Recall that $\bar \sigma = \textit{\textbf{H}}^T \sigma$ where $\sigma_i$, for every $i \in \mathcal{V}$, is given in \eqref{R_obs_f}, $\xi^*_k := \text{arg} \underset{\bar{\xi}_k \in \Xi}{\text{min}} U(\bar{R}_k, \bar{\xi}_k)$, and the sets $\mathcal{F}$ and $\mathcal{J}$ are defined in \eqref{network_f_j_set}. Note that, since the relative bearings are time-varying, we consider the time $t$ as an additional state variable to make the overall system \eqref{T_hybrid_sys} autonomous. Note also that $\bar{\mathcal{F}}\cup \bar{\mathcal{J}} = \bar{\mathcal{S}}_h$ and the hybrid closed-loop system \eqref{T_hybrid_sys} satisfies the hybrid basic conditions. Next, we will prove that the set of desired equilibria $\bar{\mathcal{A}}_h:=\{\bar x_h \in \bar{\mathcal{S}}_h:~\bar R_1=I_3, \hdots, \bar R_M=I_3, \xi_1=0, \hdots, \xi_M=0, e_1=0, \hdots, e_N = 0\}$ of the hybrid closed-loop system \eqref{T_hybrid_sys} is globally asymptotically stable and the number of jumps is finite. Define $e:=\left[e_1^T, e_2^T, \hdots, e_N^T\right]^T \in \mathbb{R}^{3 N}$. According to the flows of \eqref{T_hybrid_sys}, one has
\begin{align}
    \dot{e} = -k_p \textit{\textbf{L}}_B(t) e.
\end{align}
\noindent Notice that $(\textbf{1}_N\otimes I_3)^T e(0) = 0$ and $(\textbf{1}_N\otimes I_3)^T \dot{e}=0$.  Moreover, since the bearings $b_{ij}(t)$, for every $(i, j)\in \mathcal{E}$, are bounded, $\forall t\geq0$, it follows that $\textit{\textbf{L}}_B(t)$ is bounded. With all of these and Assumption \ref{pe_ass}, it follows from \cite[Lemma 5]{antonio_2002} that
\begin{equation}\label{ineq}
    ||e(t)||^2 \leq ||e(0)||^2 \text{e}^{-\beta t},
\end{equation}
where $\beta$ is a positive scalar. Again, using the fact that $\textit{\textbf{L}}_B(t)$ is bounded, $\forall t\geq 0$, together with inequality \eqref{ineq}, it follows from the converse theorem \cite{khalil2002nonlinear} that there is exist a real-valued function $U_e: [0, \infty) \times \mathbb{R}^{3N} \rightarrow \mathbb{R}$ such that the following inequalities hold:
\begin{align}
    c_1 ||e||^2&\leq U_e (t, e) \leq c_2 ||e||^2 \label{posi_defi}\\
    \dot{U}_e (t, e) &\leq - c_3 ||e||^2,\label{dot_posi_defi}
\end{align}
where $c_1, c_2$ and $c_3$ are positive constants. Now, consider the following Lyapunov function candidate:
\begin{align}
    \mathcal{L}(\bar x_h)&= U_R(x_h) + U_e(t, e) =\sum_{k=1}^{M}U(\bar R_k,\xi_k) + U_e(t, e)\nonumber\\
    &=\hspace{-0.1cm}\sum_{k=1}^{M}\hspace{-0.05cm} \left(\text{tr}\Big(A\left(I_3-\bar R_k\mathcal{R}_\alpha(\xi_k,u)\right)\Big)+\frac{\gamma}{2}\xi_k^2\right)+U_e (t, e).\nonumber
\end{align}
Recall that $U_R(x_h)$ is a potential function on $\mathcal{S}_h$ with respect to $\mathcal{A}_h$. Combining this fact with inequality \eqref{posi_defi}, one can verify that $\mathcal{L}(\bar x_h)$ is positive definite on $\bar{\mathcal{S}}_h$ with respect to $\bar{\mathcal{A}}_h$. The time-derivative of $\mathcal{L}(\bar x_h)$, along the trajectories generated by the flows of the hybrid closed-loop system \eqref{T_hybrid_sys}, is given by
\begin{align}
    \dot{\mathcal{L}}(\bar x_h)=&-2k_R||\textit{\textbf{H}} \Psi_\nabla^{\bar R}||^2-k_\xi || \Psi_\nabla^\xi||^2+\dot U_e(t, e)\nonumber\\
    \leq&-2k_R||\textit{\textbf{H}} \Psi_\nabla^{\bar R}||^2-k_\xi || \Psi_\nabla^\xi||^2-c_3||e||^2\label{inq_mm} \\
    \leq& 0\label{flow_T_hyb_2},
\end{align}
where the elements of the vectors $\Psi_\nabla^{\bar R}$ and $\Psi_\nabla^\xi$ are explicitly given in Proposition \ref{chr_pf}. Inequality \eqref{inq_mm} was obtained using the fact given in  \eqref{dot_posi_defi}. This implies that $\mathcal{L}(\bar x_h)$ is non-increasing over the flows of \eqref{T_hybrid_sys}. Furthermore, one has 
\begin{align}
    \mathcal{L}(\bar x_h)-\mathcal{L}(\bar x_h^+)&=U_R(x_h)-U_R(x_h^+)\label{jump_T_hyb_1}\\
    =&\sum_{k=1}^{M}\left(U(\bar{R}_k, \xi_k)-U(\bar{R}_k^+, \xi_k^+)\right)\nonumber\\
    \geq&\delta\label{jump_T_hyb_2},
\end{align}
where we have used the fact that $U_e(t, e)-U_e(t^+, e^+)=0$ to obtain \eqref{jump_T_hyb_1}. Inequality \eqref{jump_T_hyb_2} shows the strict decrease of $\mathcal{L}(\bar x_h)$ over the jumps of \eqref{T_hybrid_sys}. Moreover, using arguments similar to the ones used in the proof of Theorem \ref{theorem2}, one can show that the set $\bar{\mathcal{A}}_h$ is stable, every maximal solution of the hybrid closed-loop system \eqref{T_hybrid_sys} is bounded, and the number of jumps is finite.\\
Now, let us prove the global asymptotic stability of the set $\bar{\mathcal{A}}_h$. Following the same steps as in the proof of Theorem \ref{theorem2}, with{\small
\begin{align}
     u_{\bar{\mathcal{F}}}(\bar x_h) &:=\begin{cases}
                     -2k_R||\textit{\textbf{H}} \Psi_\nabla^{\bar R}||^2-k_\xi || \Psi_\nabla^\xi||^2-c_3||e||^2~~\text{if}~\bar x_h \in \bar{\mathcal{F}},\\
                     -\infty~~~~~~~~~~~~ \text{otherwise},
                \end{cases}\\
     u_{\bar{\mathcal{J}}}(\bar x_h) &:=\begin{cases}
                    &-\delta~~~~~~~~~~~~ \text{if}~\bar x_h \in \bar{\mathcal{J}},\\
                    & -\infty~~~~~~~~~~ \text{otherwise},
                \end{cases}
\end{align}}one can show that every maximal solution of the hybrid system \eqref{T_hybrid_sys} converges to the largest weakly invariant subset $\bar{\mathcal{A}}_h$. Furthermore, using the fact that every maximal solution of \eqref{T_hybrid_sys} is bounded, $\bar G(\bar x_h) \in \bar{\mathcal{F}} \cup \bar{\mathcal{J}}$ for every $\bar x_h \in \bar{\mathcal{J}}$, and $\bar F(\bar x_h)\subset T_{\bar{\mathcal{F}}}(\bar x_h)$, for every $\bar x_h \in \bar{\mathcal{F}}\setminus \bar{\mathcal{J}}$, according to \cite[Proposition 6.10]{goebel2012hybrid}, one can verify that every maximal solution of \eqref{T_hybrid_sys} is complete. This, together with the fact that \eqref{T_hybrid_sys} satisfies the basic hybrid conditions, implies that the set $\bar{\mathcal{A}}_h$ is globally asymptotically stable for the hybrid system \eqref{T_hybrid_sys}. This completes the proof.
\end{proof}
\begin{rmk}
Note that the global asymptotic stability of the set $\bar{\mathcal{A}}_h$ for the hybrid system \eqref{T_hybrid_sys} implies that the poses of the $N$ agents can be estimated up to a constant (common) translation and orientation, which can be determined if at least one agent in the formation has access to its absolute attitude and position (\ie, having a leader in the group), in which case the agents' poses can be estimated without ambiguity.
\end{rmk}
\begin{rmk}
Unlike most existing works (see, \eg, \cite{li_2020,lee_2019,Zhao_auto2015}), our proposed hybrid distributed estimation scheme can globally estimate the individual poses of multi-agent rigid body systems subjected to time-varying translational and rotational motion. Furthermore, in contrast to \cite{Tang_CDC2020,Tang_IFAC2020}, our scheme relies on local time-varying bearing measurements. Thanks to the hybrid distributed attitude observer \eqref{R_obs_f}-\eqref{R_obs_j} which was instrumental in designing this scheme with a global asymptotic stability guarantees.   
\end{rmk}

\section{SIMULATION}\label{s6}
In this section, we will present some numerical simulations to illustrate the performance of the continuous attitude observer \eqref{observer_dynamics}, \eqref{C_correcting_term}, the hybrid attitude observer \eqref{R_obs_f}-\eqref{R_obs_j}, as well as the hybrid distributed pose estimation law \eqref{p_obs_f}-\eqref{p_obs_j}.\\
We consider a five-agent system in a three-dimensional space that forms a square pyramid rotating around the $z$-axis (see Figure \ref{Real_formation}) with the following positions: $p_i(t)=R^T(t)p_i(0)$ where $R(t)=[\cos{\frac{\pi}{6}t}~-\sin{\frac{\pi}{6}t}~0;~\sin{\frac{\pi}{6}t}~\cos{\frac{\pi}{6}t}~0;~0~0~1]$, $p_1(0)=[-2~-2~-2]^T$, $p_2(0)=[2~-2~-2]^T$, $p_3(0)=[-2~2~-2]^T$, $p_4(0)=[2~2~-2]^T$ and $p_5(0)=[0~0~0]^T$. The rotational motions of the agents are driven by the following angular velocities: $\omega_1=[1~-2~1]^T$, $\omega_2(t)=[-\cos{3t}~1~\sin{2t}]^T$, $\omega_3(t)=[-\cos{t}~1~\sin{2t}]^T$, $\omega_4(t)=[-\cos{2t}~1~\sin{5t}]^T$ and $\omega_5=[1.5~4~5]^T$, where the initial rotations of all agents are chosen to be the identity, \ie, $R_i(0)=I_3$, for every $i \in \mathcal{V}$. 

\begin{figure}[H]
    \centering
    \includegraphics[width=0.47\linewidth]{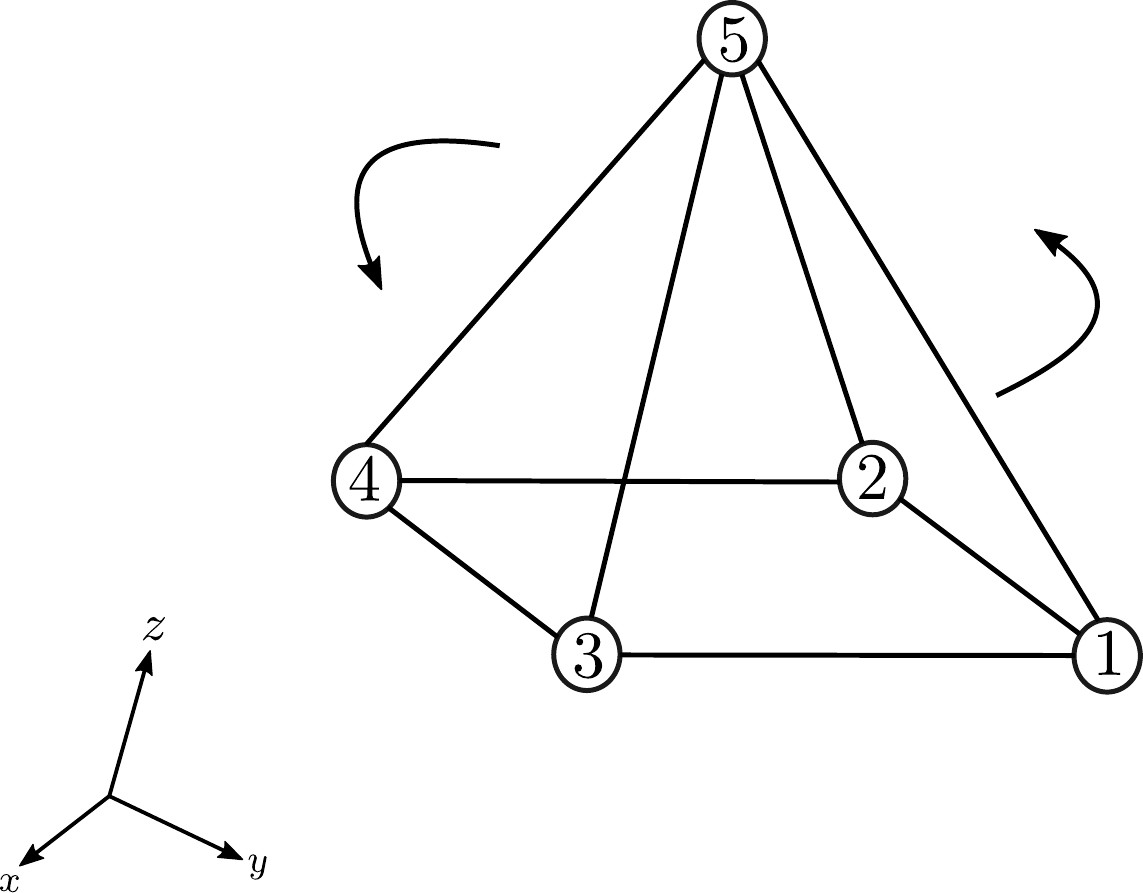}
    \caption{The actual formation, $\mathcal{G}(p)$, in $\mathbb{R}^3$.}
    \label{Real_formation}
\end{figure}

\indent We use an undirected graph topology to describe the interactions between these agents (see Figure \ref{undirected_graph}). Accordingly, the neighbors sets are given as $\mathcal{N}_1=\{2\}$, $\mathcal{N}_2 = \{1, 3\}$, $\mathcal{N}_3 = \{2, 4\}$, $\mathcal{N}_4 = \{3, 5\}$ and $\mathcal{N}_5 = \{4\}$.

\begin{figure}[h]
    \centering
    \subfigure[]{\includegraphics[width=0.47\linewidth]{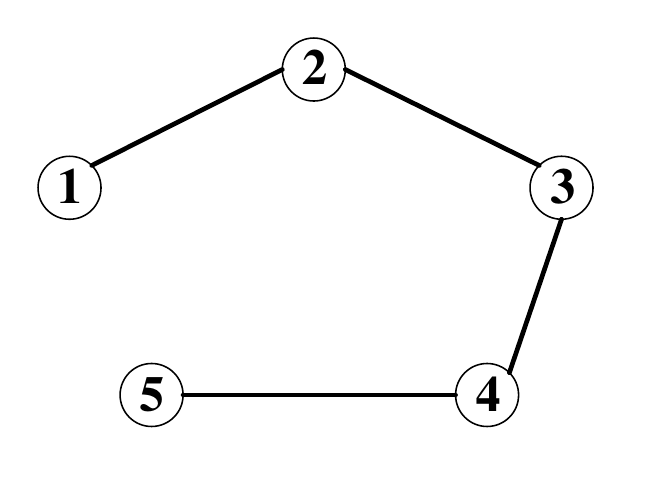}\label{undirected_graph}} 
    \subfigure[]{\includegraphics[width=0.47\linewidth]{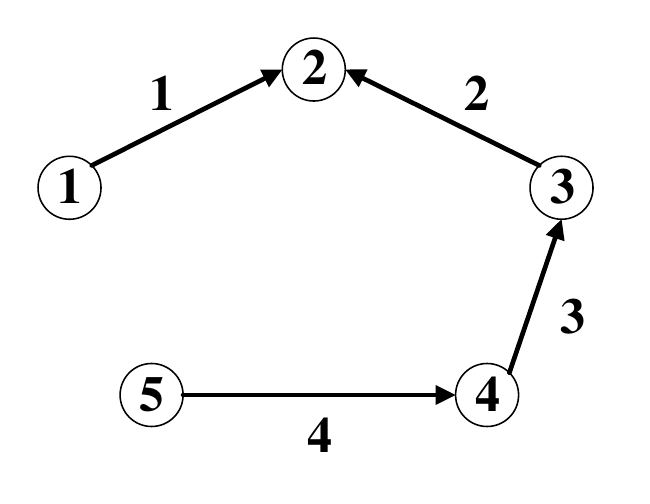}\label{graph}}
    \caption{The interaction graph $\mathcal{G}$: (a) without an orientation (b) with an orientation.}
    \label{simulation}
\end{figure} 

\indent To proceed with the first simulation, we assign an arbitrary orientation to the graph $\mathcal{G}$ as it is shown in Figure \ref{graph}. The attitude initial conditions, for both schemes, are chosen as: $\hat{R}_1(0)=\mathcal{R}_\alpha(-\frac{\pi}{2},v)$, $\hat{R}_2(0)=\mathcal{R}_\alpha(\frac{\pi}{2},v)$, $\hat{R}_3(0)=\mathcal{R}_\alpha(-\frac{\pi}{2},v)$, $\hat{R}_4(0)=\mathcal{R}_\alpha(\frac{\pi}{2},v)$ and $\hat{R}_5(0)=\mathcal{R}_\alpha(-\frac{\pi}{2},v)$, with $v=[0~0~1]^T$. In addition, for \textit{Hybrid observer}, we choose the following initial conditions for the auxiliary variables: $\xi_k(0)=0$ for every $k\in\{1, 2, 3, 4\}$. Note that, according to these initial conditions, one has $\bar R_k(0)=\mathcal{R}_\alpha(\pi,v)$ and $\xi_k(0)=0$, $\forall k=\{1, 2, 3, 4\}$, which implies that $x(0) \in \Upsilon \setminus \mathcal{A}$ and $x_h(0) \in \Upsilon_h \setminus \mathcal{A}_h$. Based on the parameters set $\mathcal{P}$, we select the following parameters: $\Xi=\{0.08\pi\}$, $A=\text{diag}([5, 8.57, 12])$, $\gamma=1.9251$, $\delta=0.0030$, $u=[0~0.6455~0.7638]^T$ and $\Delta^*=5$. For the gain parameters, we pick: $k_R= 1.1$ and $k_{\xi}=5$. To simulate the \textit{Hybrid observer}, we have used the HyEQ Toolbox \cite{Sanfelice_matlab}.\\
\indent Figure \ref{R_bar} and Figure \ref{xi} depict the time evolution of the relative attitude error norms $|\bar R_k(t)|_I$, for both schemes, and the auxiliary variables $\xi_k(t)$, $\forall k=\{1, 2, 3, 4\}$, respectively, associated with each edge. Notice that, at $t=0$, the variables $\xi_k(t)$, $\forall k=\{1, 2, 3, 4\}$, jump from $0$ to $0.08\pi$ and then converge to zero as $t\rightarrow \infty$. Also, the relative attitude error norms $|\bar R_k(t)|_I$, for both schemes, converge to zero as $t\rightarrow \infty$.

\begin{figure}[h]
    \centering
    \includegraphics[width=1\linewidth]{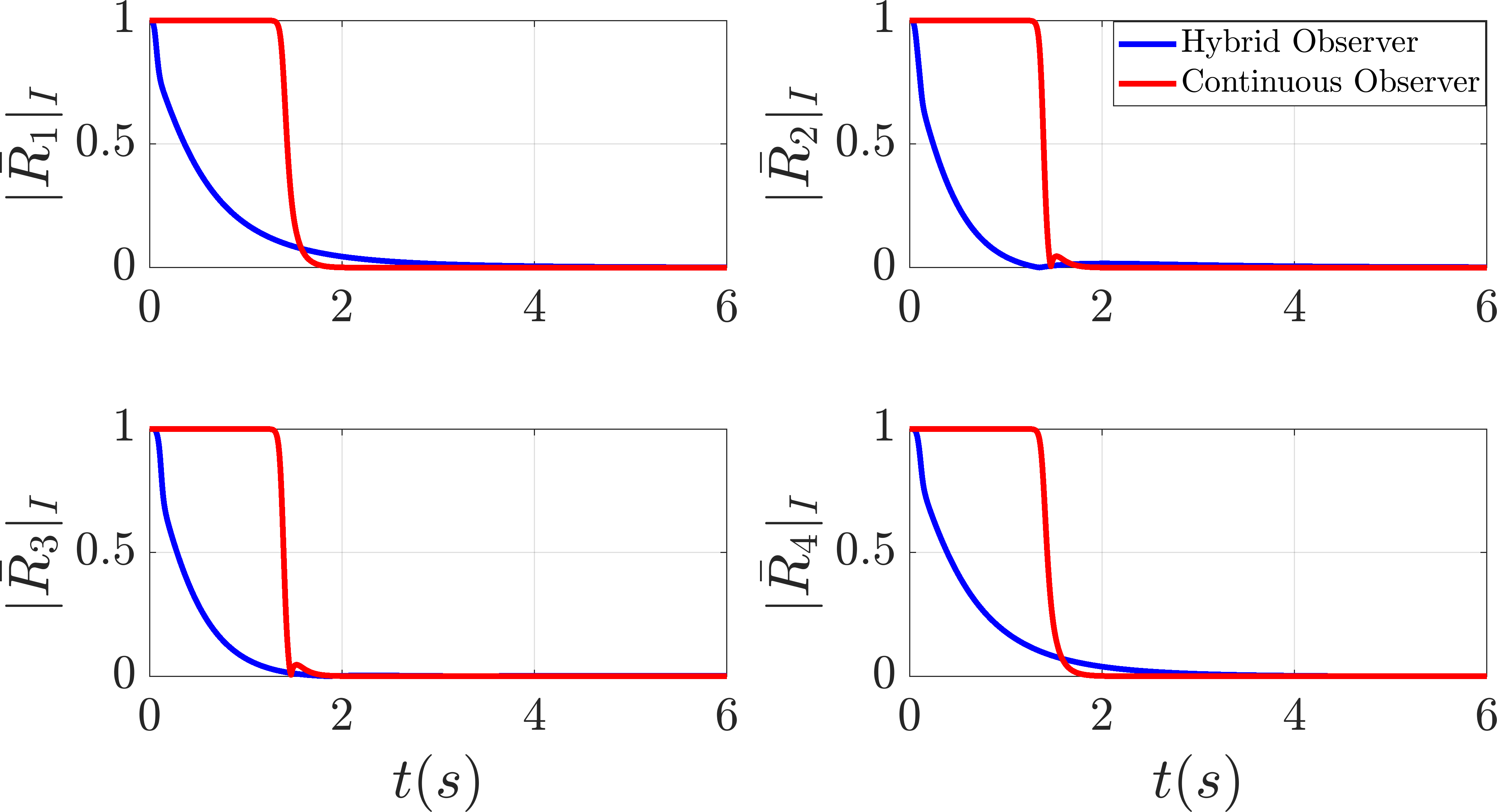}
    \caption{The time evolution of the relative attitude error norm, associated with each edge, for the  \textit{Continuous observer} and the \textit{Hybrid observer}.}
    \label{R_bar}
\end{figure}

\begin{figure}[h]
    \centering
    \includegraphics[width=0.8\linewidth]{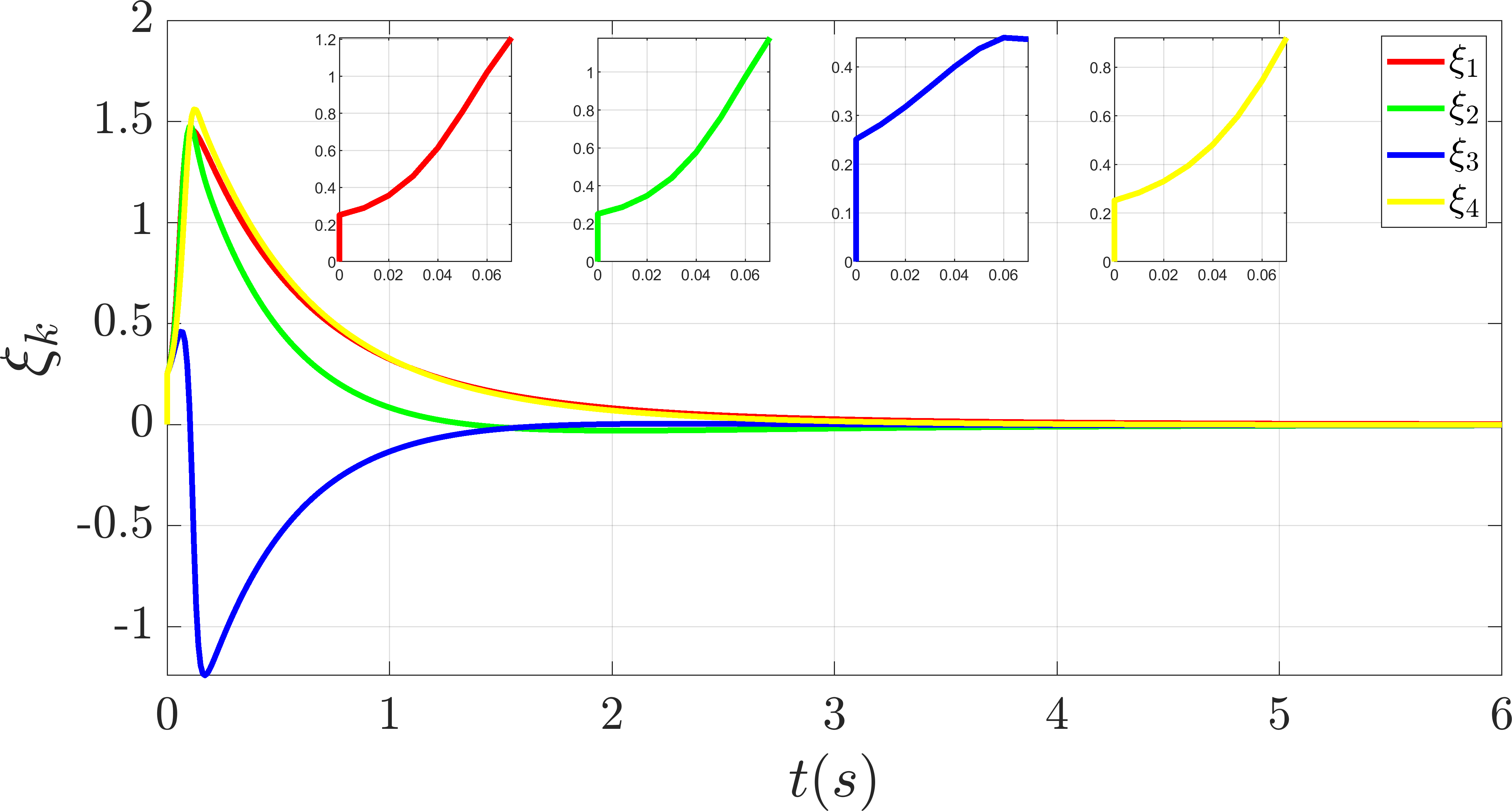}
    \caption{The time evolution of the hybrid variable $\xi_k$ associated with each edge.}
    \label{xi}
\end{figure}

\indent In our second simulation, we consider the proposed hybrid distributed pose estimation scheme \eqref{p_obs_f}-\eqref{p_obs_j}. We pick the same initial conditions as in the first simulation, along with the following initial conditions for the estimated positions: $\hat p_1(0)=[1~1~0]^T$, $\hat p_2(0)=[-1~2~1]^T$, $\hat p_3(0)=[-2~0~-1]^T$, $\hat p_4(0)=[-1~2~2]^T$ and $\hat p_5(0)=[-1~1~1]^T$. We pick $k_p=1$.\\
The time evolution of the position and the relative position estimation error norms are provided in Figure \ref{p_tilde} and Figure \ref{e}, respectively.

\begin{figure}[H]
    \centering
    \includegraphics[width=0.8\linewidth]{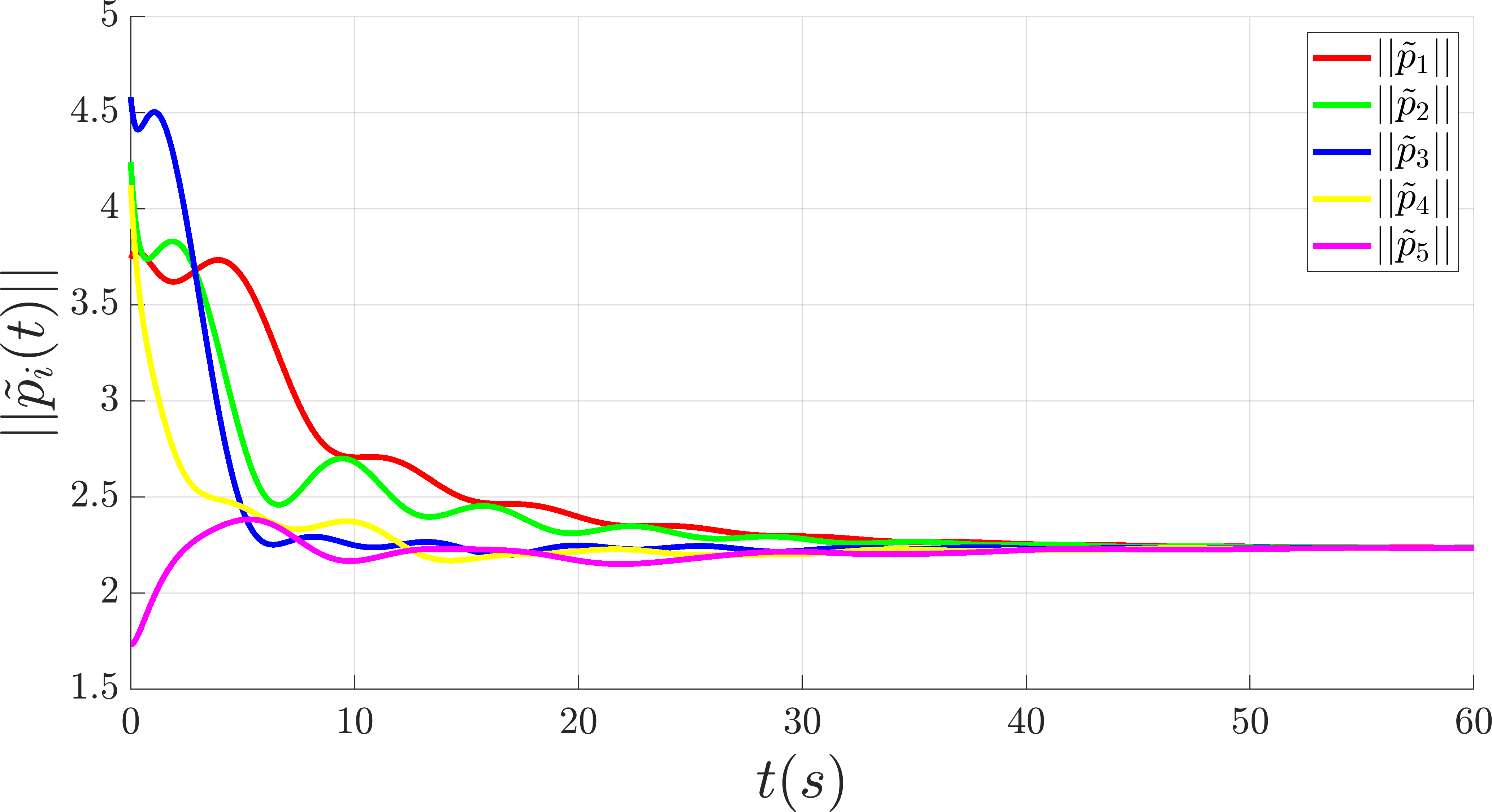}
    \caption{The time evolution of the position estimation error norm.}
    \label{p_tilde}
\end{figure}

\begin{figure}[H]
    \centering
    \includegraphics[width=0.8\linewidth]{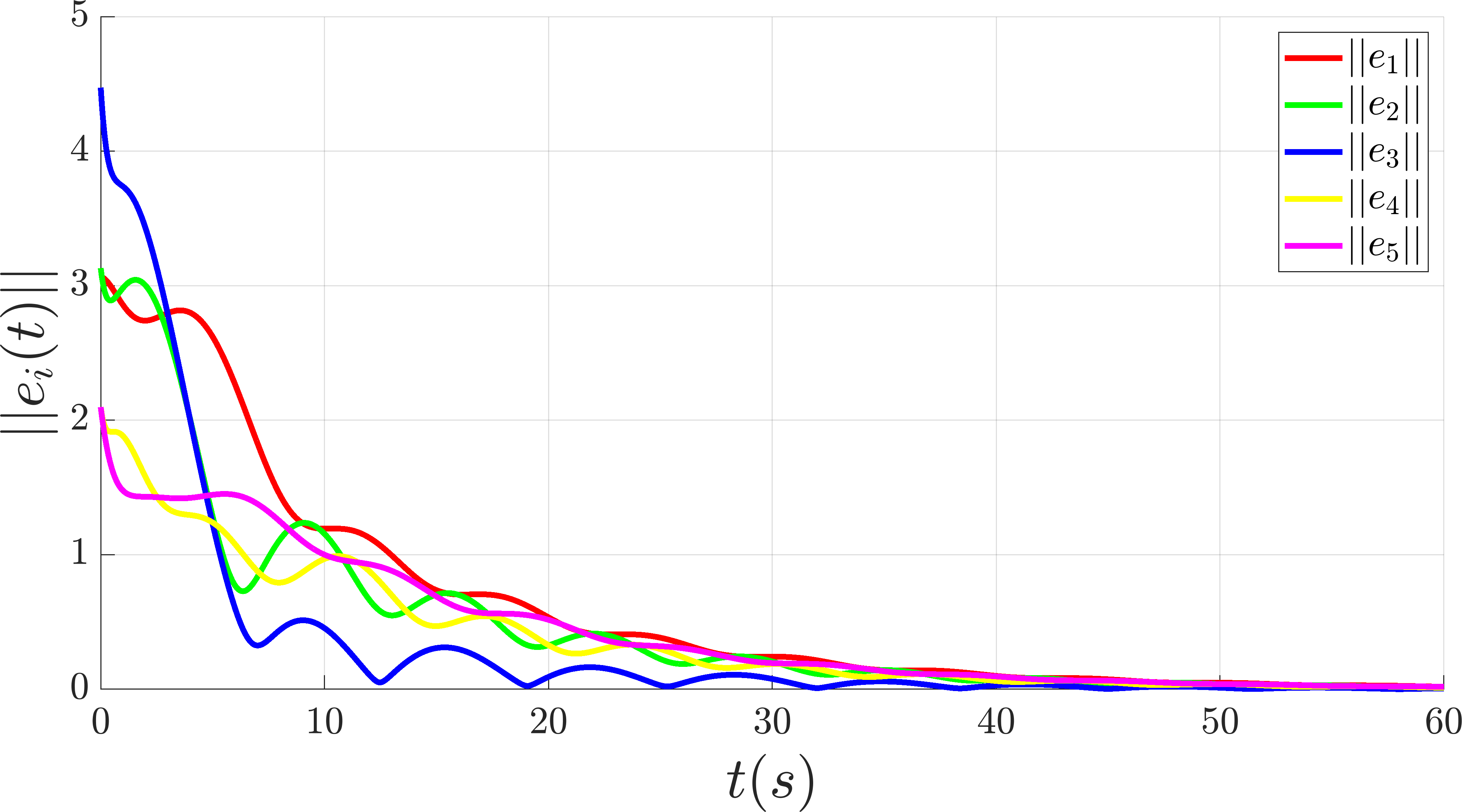}
    \caption{The time evolution of the relative position estimation error norm.}
    \label{e}
\end{figure}

\section{CONCLUSIONS}\label{s7}
Two nonlinear distributed attitude estimation schemes have been proposed for multi-agent systems evolving on $SO(3)$. The first continuous observer, endowed with almost global asymptotic stability, is used as a baseline for the derivation of a stronger hybrid version enjoying global asymptotic stability of the desired equilibrium set $\mathcal{A}_h$, which implies that the attitude of each agent can be estimated (globally) up to a common constant orientation which can be uniquely determined if at least one agent has access to its absolute attitude. This hybrid distributed attitude estimation scheme relies on auxiliary time-varying scalar variables associated to each edge $k$, namely $\xi_k$, which are governed by the hybrid dynamics \eqref{H_observer_dynamics_1}-\eqref{H_observer_dynamics_2}. These auxiliary variables are appropriately designed to keep the relative attitude errors away from the undesired equilibrium set $\Upsilon_h\setminus \mathcal{A}_h$ generated by smooth vector fields. Furthermore, the hybrid distributed attitude estimation scheme has been used with a hybrid distributed position estimation scheme to globally asymptotically estimate the pose of $N$ vehicles navigating in three-dimensional space, up to a common constant translation and orientation, relying on local relative time-varying bearings between the agents, as well as the individual linear velocity measurements.\\ 
Note that our proposed estimation schemes rely on the assumption that the interaction graph topology is a tree (Assumption \ref{graph_ass}), which is practical in terms of communication and sensing costs. However, the main drawback of this graph topology is its vulnerability to failure since the failure of one agent will engender the disconnection of successive agents. Relaxing this assumption would be an interesting extension of this work. 



\bibliographystyle{IEEEtran}
\bibliography{References}
\end{document}